\newcommand{\com}[1]{}
\newcommand{\bbox}{\rule{0.1in}{0.1in}}
\newenvironment{mycase}[1]{\par\noindent{\itshape Case #1}:\par\noindent\hspace{0.1in}
\begin{tabular}{p{.93\textwidth}}}
{\end{tabular}\par}
\newenvironment{open}{\par\medskip
\noindent\textbf{Open question. }}{\par\medskip}
\begin{document}
\title{The Maximum Clique Problem in Multiple Interval Graphs\thanks{This work was partially supported by the grant ANR-09-JCJC-0041.}}
\author{Mathew C. Francis \and Daniel Gon\c{c}alves \and Pascal Ochem}
\institute{LIRMM, CNRS et Universit\'e Montpellier 2,\\
161 rue Ada 34392 Montpellier Cedex 05, France.\\
\email{\{francis,goncalves,ochem\}@lirmm.fr}}
\date{30 December, 2011}
\maketitle
\pagestyle{plain}
\begin{abstract}
Multiple interval graphs are variants of interval graphs where instead
of a single interval, each vertex is assigned a set of intervals on
the real line.  We study the complexity of the MAXIMUM CLIQUE problem
in several classes of multiple interval graphs.  The MAXIMUM CLIQUE
problem, or the problem of finding the size of the maximum clique, is
known to be NP-complete for $t$-interval graphs when $t\geq 3$ and
polynomial-time solvable when $t=1$. The problem is also known to be
NP-complete in $t$-track graphs when $t\geq 4$ and polynomial-time
solvable when $t\leq 2$. We show that MAXIMUM CLIQUE is already
NP-complete for unit 2-interval graphs and unit 3-track
graphs. Further, we show that the problem is APX-complete for
2-interval graphs, 3-track graphs, unit 3-interval graphs and unit
4-track graphs. We also introduce two new classes of graphs called
$t$-circular interval graphs and $t$-circular track graphs and study
the complexity of the MAXIMUM CLIQUE problem in them.  On the positive
side, we present a polynomial time $t$-approximation algorithm for
WEIGHTED MAXIMUM CLIQUE on $t$-interval graphs, improving earlier work
with approximation ratio $4t$.
\end{abstract}
\pagenumbering{arabic}
\bibliographystyle{plain}
\section{Introduction}
Given a family of sets $\mathcal{F}$, a graph $G$ with vertex set
$V(G)$ and edge set $E(G)$ is said to be an ``intersection graph of
sets from $\mathcal{F}$'' if $\exists f:V(G)\rightarrow \mathcal{F}$
such that for distinct $u,v\in V(G)$, $uv\in E(G)\Leftrightarrow
f(u)\cap f(v)\not=\emptyset$.  When $\mathcal{F}$ is the set of all
closed intervals on the real line, it defines the well-known class of
interval graphs.  A \emph{$t$-interval} is the union of $t$ intervals
on the real line.  When $\mathcal{F}$ is the set of all $t$-intervals,
it defines the class of graphs called \emph{$t$-interval graphs}. This
class was first defined and studied by Trotter and
Harary~\cite{TrotHar}.  Given $t$ parallel lines (or tracks), if each
element of $\mathcal{F}$ is the union of $t$ intervals on different
lines, one defines the class of \emph{$t$-track graphs}.  It is easy
to see that this class forms a subclass of $t$-interval graphs.

These classes of graphs received a lot of attention, for both their
theoretical simplicity and their use in various fields like
Scheduling~\cite{Bar,Hochbaum} or Computational
Biology~\cite{Aumann,Crochemore}.  West and Shmoys~\cite{WestShmoys}
showed that recognizing $t$-interval graphs for $t\geq 2$ is
NP-complete.



Given a circle, the intersection graphs of arcs of this circle forms
the class of \emph{circular arc graphs}.  We introduce similar
generalizations of circular arc graphs. If $G$ has an intersection
representation using $t$ arcs on a circle per vertex, then $G$ is
called a {\em $t$-circular interval} graph. If instead, $G$ has an
intersection representation using $t$ circles and exactly one arc on
each circle corresponding to each vertex of $G$, then $G$ is called a
{\em $t$-circular track} graph. Note that in this case, the class of
$t$-circular track graphs may not be a subclass of the class of
$t$-circular interval graphs.\com{ although we do not know of a
  separating example.}  \com{Clearly, the class of $t$-interval graphs
  forms a subclass of $t$-circular interval graphs and the class of
  unit $t$-interval graphs is a subclass of unit $t$-circular interval
  graphs.}  One can see after cutting the circles, that $t$-circular
interval graphs and $t$-circular track graphs are respectively
contained in $(t+1)$- and $(2t)$-interval graphs.

For all these intersection families of graphs, one can define a
subclass where all the intervals or arcs have the same length. We
respectively call those subclasses \emph{unit $t$-interval},
\emph{unit $t$-track}, \emph{unit $t$-circular interval}, and \emph{unit
  $t$-circular track graphs}.

MAXIMUM WEIGHTED  CLIQUE is the problem of deciding, given a graph $G$
with weighted vertices and an integer $k$, whether $G$ has a clique of
weight $k$.  The case where all the weights are 1 is MAXIMUM CLIQUE.
Zuckerman~\cite{Zuckerman} showed that unless P=NP, there is no
polynomial time algorithm that approximates the maximum clique within
a factor $O(n^{1-\epsilon})$, for any $\epsilon > 0$. MAXIMUM
CLIQUE has been studied for many intersection graphs families. It has
been shown to be polynomial for interval filament
graphs~\cite{Gavril00}, a graph class including circle graphs, chordal
graphs and co-comparability graphs. It has been shown to be
NP-complete for $B_1$-VPG graphs~\cite{MiddPfeiff} (intersection of
strings with one bend and axis-parallel parts~\cite{Asi}), and for
segment graphs~\cite{Cabello} (answering a conjecture of
Kratochv\'{i}l and Ne\v{s}et\v{r}il~\cite{Krat}).
 
MAXIMUM CLIQUE is polynomial for interval graphs (folklore) and for
circular interval graphs~\cite{Gavril73,Hsu}. However, Butman et
al.~\cite{Butman} showed that MAXIMUM CLIQUE is NP-complete for
$t$-interval graphs when $t\geq 3$.  For $t$-track graphs, MAXIMUM
CLIQUE is polynomial-time solvable when $t\leq 2$ and NP-complete when
$t\geq 4$~\cite{Koenig}.  Butman et al. also showed a polynomial-time
$\frac{t^2-t+1}{2}$ factor approximation algorithm for MAXIMUM CLIQUE
in $t$-interval graphs. Koenig~\cite{Koenig} observed that a similar
approximation algorithm with a slightly better approximation ratio
$\frac{t^2-t}{2}$ exists for MAXIMUM CLIQUE in $t$-track
graphs. Butman et al. asked the following questions:
\begin{itemize}
\item Is MAXIMUM CLIQUE NP-hard in 2-interval graphs?
\item Is it APX-hard in $t$-interval graphs for any constant $t\geq 2$?
\item Can an algorithm with a better approximation ratio than
  $\frac{t^2-t+1}{2}$ be achieved for $t$-interval graphs?
\end{itemize}
We answer all of these questions in the affirmative. As far as the
third question is concerned, Kammer, Tholey and Voepel~\cite{Kammer}
have already presented an improved polynomial-time approximation
algorithm that achieves an approximation ratio of $4t$ for
$t$-interval graphs. In this paper (Section~\ref{sec:approx}), we
present a linear time $2t$-approximation algorithm, and a polynomial
time $t$-approximation algorithm for MAXIMUM WEIGHTED CLIQUE in
$t$-interval graphs (and thus in $t$-track graphs), $t$-circular
interval graphs, and $t$-circular track graphs.  Then we show in
Section~\ref{sec:apx-h} that MAXIMUM CLIQUE is APX-complete for many
of these families (including 2-interval graphs). In
Section~\ref{sec:np-h}, we show that for some of the remaining classes
(including unit 2-interval graphs) MAXIMUM CLIQUE is NP-complete.  In
Section~\ref{sec:comp} we give some APX-hardness results for several
problems restricted to the complement class of $t$-interval graphs.
Finally, we conclude with some remarks and open questions.

\section{Preliminaries}

Consider a circle $C$ of length $l$ with a distinguished point $O$.
The coordinate of a point $p\in C$ is the length of the arc going
clockwise from $O$ to $p$. Given two reals $p$ and $q$, $[p,q]$ is the
arc of $C$ going clockwise from the point with coordinate $p$ to the
one with coordinate $q$. In the following, coordinates
are understood modulo $l$.

A \emph{representation} of a $t$-interval graph $G$ is a set of $t$
functions, $I_1,\ldots,I_t$, assigning each vertex in $V(G)$ to an
interval of the real line. For $t$-track graphs we have $t$ lines
$L_1,\ldots,L_t$, and each $I_i$ assigns intervals from $L_i$.
Similarly, for a representation of $t$-circular interval graphs
(resp. $t$-circular track graphs) we have a circle $C$ (resp. $t$
circles $C_1,\ldots,C_t$) and $t$ functions $I_i$, assigning each
vertex in $V(G)$ to an arc of $C$ (resp. of $C_i$).

\section{Approximation algorithms}
\label{sec:approx}

The first approximation algorithms for the MAXIMUM CLIQUE in
$t$-interval graphs and $t$-track graphs~\cite{Butman,Koenig} are
based on the fact that any $t$-interval representation
(resp. $t$-track representation) of a clique admits a transversal
(i.e. a set of points touching at least one interval of each vertex)
of size $\tau = t^2-t+1$ (resp. $\tau =
t^2-t$)~\cite{Kaiser}. Scanning the representation of a graph $G$ from
left to right (in time $O(tn)$) one passes through the points of the
transversal of a maximum clique $K$ of $G$. At some of those points
there are at least $|K|/\tau$ intervals forming a subclique of
$K$. Thus, this gives an $O(tn)$-time $\tau$-approximation. Butman et
al. improved this ratio by 2 by considering every pair of points in
the representation. The intervals at these points induce a
co-bipartite graph, for which computing the maximum clique is
polynomial (as computing a maximum independent set of a bipartite
graph is polynomial).  Then one can see that this gives a polynomial
time $(\tau/2)$-approximation algorithm.  This actually gives a
polynomial exact algorithm for the MAXIMUM CLIQUE in $2$-track
graphs~\cite{Koenig}, as $\tau =2$ in this case.  For the other cases,
Kammer et al.~\cite{Kammer} greatly improved the approximation ratios
from roughly $t^2/2$ to $4t$, using the new notion of $k$-perfect
orientability.  Using transversal arguments, we can easily improve
this ratio for some subclasses.  A representation is \emph{balanced}
if for each vertex, all its intervals (or arcs) have the same length.
\begin{remark}
In any balanced $t$-interval (resp. $t$-track, $t$-circular interval,
or $t$-circular track) representation of a clique, the $2t$ interval
extremities of the vertex with the smallest intervals form a
transversal.  Thus, in those classes of graphs MAXIMUM CLIQUE admits a
linear time $2t$-approximation algorithm, and a polynomial time
$t$-approximation algorithm.
\end{remark}

We shall now show how to achieve the same approximation ratio without
restraining to balanced representations.

\begin{theorem}\label{thm:approx}
There is a linear time $2t$-approximation algorithm, and a polynomial
time $t$-approximation algorithm for MAXIMUM WEIGHTED  CLIQUE on
$t$-interval graphs, $t$-track graphs, $t$-circular interval graphs,
and $t$-circular track graphs.
\end{theorem}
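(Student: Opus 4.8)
The plan is to dispose of three of the four classes by reduction and then isolate a single weighted ``transversal'' statement that drives both algorithms. First, $t$-track graphs form a subclass of $t$-interval graphs with the \emph{same} parameter $t$, so nothing extra is needed there. For the two circular classes I would \emph{not} route through the lossy inclusions into $(t+1)$- and $(2t)$-interval graphs mentioned in the preliminaries, since these would destroy the ratio; instead I would observe that the whole argument below is local and uses only the notion ``the set of vertices having an interval (arc) containing a given point'', which behaves on a circle exactly as on a line, so it applies verbatim to $t$-circular interval and $t$-circular track representations.

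Next, the two algorithms. Given a $t$-interval representation and a point $p$, let $S(p)$ be the set of vertices owning an interval that contains $p$; this is a clique of $G$, and over all $p$ there are only $O(tn)$ distinct such sets, each realized at some interval endpoint. The linear-time algorithm does one left-to-right sweep over the $2tn$ endpoints, maintaining the set of currently active vertices and its total weight, and returns the heaviest $S(p)$ seen. The polynomial-time algorithm additionally considers, for every pair of endpoints $p,q$, the induced subgraph $G[S(p)\cup S(q)]$, which is co-bipartite, and computes its maximum weight clique in polynomial time (a maximum weight clique in a co-bipartite graph is the complement of a minimum weight vertex cover, obtainable from bipartite matching / min-cut); again the heaviest clique found is returned. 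That the output is a clique is immediate; the entire content is the approximation guarantee, and for this I would prove:

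\emph{Key Lemma.} For any clique $K$ and any $t$-interval representation there are points $p_1,\dots,p_r$ and coefficients $c_1,\dots,c_r\ge 0$ with $\sum_i c_i\le 2t$ such that every $w\in K$ satisfies $\sum\{\,c_i : w$ owns an interval containing $p_i\,\}\ge 1$ (a fractional transversal of $K$ of weight $\le 2t$; in a balanced representation one may take $r=2t$ and all $c_i=1$, recovering the Remark). Given this, for a maximum weight clique $K$ one has $w(K)\le\sum_i c_i\,w(S(p_i))\le\bigl(\sum_i c_i\bigr)\max_i w(S(p_i))\le 2t\cdot\max_p w(S(p))$, so the sweep is a $2t$-approximation; and by peeling off the heaviest $S(p)$ and re-applying the weight-$2t$ bound to the leftover of $K$, one shows some pair of points $p,q$ already captures weight $\ge w(K)/t$ of $K$, so the corresponding co-bipartite subproblem yields a clique of weight $\ge w(K)/t$.

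The hard part, and really the only substantial one, will be the Key Lemma — producing the weight-$\le 2t$ fractional transversal with no balancing hypothesis. The natural starting point is to take a shortest interval $I^*$, of length $\ell^*$, over all intervals of vertices of $K$, say owned by $v^*\in K$: since $\ell^*$ is globally minimum, any interval of a vertex of $K$ that meets an interval of $v^*$ has length $\ge\ell^*$, hence either contains an endpoint of that interval of $v^*$ or is strictly nested inside it. Thus the $2t$ endpoints of $v^*$ already take care of every vertex of $K$ except those nested inside $v^*$'s long intervals, and the crux is to absorb those vertices into a bounded amount of additional fractional weight: I would subdivide each long interval of $v^*$ at scale $\ell^*$, place fractional weight at the subdivision points, and charge the weight placed in the interior of a long interval back to the $2t$ extremities of $v^*$, using that a nested vertex necessarily contributes a comparatively long interval there; the technical work is to carry out this charging so that the total stays $\le 2t$. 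Running the same scheme on a circle (choosing the cut so as not to split $v^*$'s shortest arc) handles the two circular variants, completing all four cases.
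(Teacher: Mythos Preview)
Your high-level algorithmic skeleton (sweep for the $2t$-ratio, co-bipartite subproblems on pairs of points for the $t$-ratio, and running the same local argument on arcs for the circular classes) matches the paper. The analysis, however, has two genuine gaps.

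\textbf{The $t$-approximation step.} Even granting your Key Lemma, the peeling argument does not reach ratio $t$. If $p_1$ maximizes $a:=w(S(p_1)\cap K)\ge w(K)/(2t)$ and you reapply the lemma to $K'=K\setminus S(p_1)$, you get $p_2$ with $w(S(p_2)\cap K')\ge (w(K)-a)/(2t)$, so the pair captures only
\[
a+\frac{w(K)-a}{2t}\;\ge\;\frac{w(K)}{2t}\Bigl(2-\frac{1}{2t}\Bigr)\;=\;\frac{w(K)}{t}-\frac{w(K)}{4t^2},
\]
which is strictly below $w(K)/t$. A weight-$2t$ fractional transversal alone is not strong enough here; you need some integral structure on $t$ points.

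\textbf{The Key Lemma itself.} The charging sketch does not bound the extra weight. Subdividing each interval of $v^*$ at scale $\ell^*$ introduces on the order of $\sum_i |I_i(v^*)|/\ell^*$ points, and nothing controls this ratio. Your phrase ``a nested vertex necessarily contributes a comparatively long interval there'' does not say where: a vertex $w$ whose only intersection with $v^*$ is a length-$\ell^*$ interval nested deep inside a very long $I_i(v^*)$ contributes nothing near any extremity of $v^*$, so there is nothing to charge against.

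The paper sidesteps both issues with one idea. For each edge $uv$ some left endpoint $u_i$ lies in an interval of $v$ (or symmetrically); orient $uv$ from $u$ to $v$ and colour it $i$. In any clique $K$ a weighted averaging gives a vertex $u$ whose out-neighbours carry weight $\ge (w(K)-w(u))/2$, and the $t$ left endpoints $u_1,\dots,u_t$ form an \emph{integral} transversal of those out-neighbours, partitioned by colour. Pigeonhole on one colour yields a single point covering weight $>w(K)/(2t)$; pigeonhole on the two heaviest colours yields two points covering weight $>w(K)/t$. No fractional transversal, no recursion, no charging, and the argument is verbatim on a circle.
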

\begin{proof}
The problem is polynomial when $t=1$, we thus assume that $t\ge 2$.
Let us prove the theorem for $t$-interval graphs, the proofs for the
other classes are exactly the same. Let $G$ be a weighted $t$-interval
graph with weight function $w(u)$ on its vertices, and let $K$ be a
maximum weighted clique of $G$.  Let $I_1,\ldots,I_t$ form a
$t$-interval representation of $G$ such that for any vertex $u\in
V(G)$, $I_i(u)=[u_i,u'_i]$. For any edge $uv$ there exists a $i$ and a
$j\in[t]$ such that the point $u_i$ belongs to $I_j(v)$, or such that
$v_j\in I_i(u)$. One can thus orient and color the edges of $G$ in
such a way that $uv$ goes from $u$ to $v$ in color $i$ if $u_i\in
I_j(v)$ for some $j$. In $K$ there is a vertex $u$ with more weight on
its out-neighbors in $K$ than on its in-neighbors in $K$. Indeed, this
comes from the fact that in the oriented graphs obtained from $K$ by
replacing each vertex $u$ by $w(u)$ vertices $u_i$ and by putting an
arc $u_iv_j$ if and only if there is an arc $uv$ in $K$, there is a
vertex $u_i$ with $d^+(u_i)\ge d^-(u_i)$, which is equivalent to
$w(N^+_K(u))\ge w(N^-_K(u))$.  Thus there exists two distinct values
$i$ and $j$ such that $u$ has at least weight $(w(K)-w(u))/2t$ on its
out-neighbors in color $i$, and at least $(w(K)-w(u))/t$ out-neighbors
in color $i$ or $j$. The vertex $u$ and its out-neighbors in a given
color clearly induce a clique of $G$ (they intersect at $u_i$). Thus
scanning the representation from left to right looking for the point
with the more weights gives a clique of weight at least
$w(u)+(w(K)-w(u))/2t > w(K)/2t$, which is a $2t$-approximation.

Then the graph induced by $u$ and its out-neighbors in color $i$ or
$j$ being co-bipartite one can compute its maximum weighted clique in
polynomial time (as computing a maximum weighted independent set of a
bipartite graph is polynomial). This clique has weight at least
$w(u)+(w(K)-w(u))/t > w(K)/t$ (the weight of the subclique of $K$
induced by $u$ and its neighbors in color $i$ or $j$).  Thus, for each
vertex $u$ of the graph and any pair $u_i$ and $u_j$ of interval left
end, if we compute the maximum weighted clique of the corresponding
co-bipartite graph, we obtain a $t$-approximation.
\end{proof}

\section{APX-hardness in multiple interval graphs}
\label{sec:apx-h}

The complement of a graph $G$ is denoted by $\overline{G}$.  Given a
graph $G$ on $n$ vertices with $V(G)=\{x_1,\ldots,x_n\}$ and
$E(G)=\{e_1,\ldots, e_m\}$, and a positive integer $w$, we define
$Subd_w(G)$ to be the graph obtained by subdividing each edge of $G$
$w$ times. If $e_k\in E(G)$ and $e_k=x_ix_j$ where $i<j$, we define
$l(k)=i$ and $r(k)=j$ (as if $x_i$ and $x_j$ were respectively the
left and the right end of $e_k$). In the following we subdivide edges
2 or 4 times.  In $Subd_2(G)$ (resp. $Subd_4(G)$), the vertices
subdividing $e_k$ are $a_k$ and $b_k$ (resp. $a_k, b_k, c_k,$ and
$d_k$) and they are such that $(x_{l(k)},a_k,b_k,x_{r(k)})$
(resp. $(x_{l(k)},a_k,b_k,c_k,d_k,x_{r(k)})$) is the subpath of
$Subd_2(G)$ (resp. $Subd_4(G)$) corresponding to $e_k$.
To prove APX-hardness results we need the following structural theorem,
which is of independent interest.
\begin{theorem}\label{thm:subd}
Given any graph $G$, 
\begin{itemize}
\item $\overline{Subd_4(G)}$ is a 2-interval graph,
\item $\overline{Subd_2(G)}$ is a unit 3-interval graph,
\item $\overline{Subd_2(G)}$ is a 3-track graph,
\item $\overline{Subd_2(G)}$ is a unit 4-track graph,
\item $\overline{Subd_2(G)}$ is a unit 2-circular interval graph (and thus a 2-circular interval graph),
\item $\overline{Subd_2(G)}$ is a 2-circular track graph, and
\item $\overline{Subd_2(G)}$ is a unit 4-circular track graph.
\end{itemize}
Furthermore, such representations can be constructed in linear time.
\end{theorem}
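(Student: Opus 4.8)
The plan is to give, for each of the seven items, an explicit geometric representation and then verify it by a routine case analysis over the possible types of a pair of vertices. Before building anything I would record the structure of $\overline{Subd_w(G)}$: the original vertices $x_1,\dots,x_n$ form a clique (no two of them are adjacent in $Subd_w(G)$); any two subdivision vertices coming from \emph{distinct} edges are adjacent; the subdivision vertices of a single edge $e_k$ induce the complement of a path, which for $Subd_2$ is just the single non-edge $a_kb_k$ and for $Subd_4$ is the three non-edges $a_kb_k$, $b_kc_k$, $c_kd_k$; and $x_i$ is adjacent to every subdivision vertex \emph{except} the unique one adjacent to $x_i$ in $Subd_w(G)$, namely except $a_k$ when $l(k)=i$ and except the last subdivision vertex of $e_k$ ($b_k$ for $Subd_2$, $d_k$ for $Subd_4$) when $r(k)=i$. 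In short, $\overline{Subd_w(G)}$ is ``almost complete'': its non-edges are exactly the edges of the subdivision, which form vertex-disjoint short paths whose two endpoints lie in the clique $\{x_1,\dots,x_n\}$.

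All seven representations then follow a common template governed by the order $x_1<x_2<\dots<x_n$. For each index $i$ I would reserve a ``slot'' near position $i$; each $x_i$ is realized by an interval (or arc) covering every slot except a small gap around its own slot, which makes $\{x_i\}$ a clique; and the subdivision vertices of $e_k$ are placed so that the one forbidden to meet $x_{l(k)}$ sits just past slot $l(k)$ and the one forbidden to meet $x_{r(k)}$ sits just past slot $r(k)$, so that each of them misses only its own forbidden endpoint but still reaches every other $x_m$. The several intervals/tracks per vertex are precisely what let a single subdivision vertex simultaneously ``step over its forbidden endpoint'' and ``reach vertices in both directions'', and what let two consecutive subdivision vertices of one edge lie in disjoint zones while every pair of subdivision vertices from different edges still meets at a common ``hub'' point.

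Concretely, for the flagship case that $\overline{Subd_4(G)}$ is a $2$-interval graph, one natural choice is to cut the line into a left half and a right half, put $x_i=[0,\ell_i]\cup[r_i,\infty)$ with $\ell_1<\dots<\ell_n$ in the left half and $r_1<\dots<r_n$ in the right half (so the $x_i$'s all share the extreme left and form a clique), give $a_k$ a short interval just above $\ell_{l(k)}$ in the left half together with a long interval in the right half reaching up to but not including $r_{l(k)}$, give $d_k$ the mirror image with $l(k),r(k)$ and the two halves swapped, and give the middle vertices $b_k,c_k$ one long ``nearly universal'' interval with a single gap chopped out to avoid $a_k$ (respectively $d_k$) plus a short interval placed to avoid each other; the extra room needed here is exactly what the two middle vertices of a $4$-subdivision provide, which is why $2$-subdivision needs one interval more. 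For the $Subd_2$ statements the same template, now with only the single middle non-edge $a_kb_k$ per edge, needs a third interval (or a third/fourth track) so that the $a$'s, the $b$'s and the $a$--$b$ adjacencies can all be represented at once; for the ``unit'' variants I would additionally rescale so that all intervals of a given vertex have equal length, which costs further tracks/intervals (hence the $4$ in ``unit $4$-track''); and the circular variants come from the observation that ``everything except a small gap around slot $i$'' is a \emph{single} arc on a circle but two intervals on a line, so closing the line into a circle buys back exactly one interval, upgrading the $3$-interval line representation of $\overline{Subd_2(G)}$ to a $2$-circular-interval one, and similarly for the track versions.

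The main obstacle is not any single gadget but the global bookkeeping: every forbidden pair must be disjoint in \emph{all} coordinates simultaneously, while each of the very many required edges must appear in \emph{some} coordinate. Two points carry the weight. First, making all subdivision vertices of distinct edges pairwise meet while the two middle vertices of one edge do not, without leaking any unwanted edge near the ``hub''. Second, the induced subgraph of $\overline{Subd_w(G)}$ on $V_a\cup V_b$ is a complete bipartite graph minus a perfect matching, which already contains an induced $C_4$ as soon as $G$ has four edges and so is not an interval graph; it is this sub-pattern that really dictates how many intervals or tracks are needed, and hence why the line cases cost one interval more than the circular ones and $Subd_2$ costs one more than $Subd_4$. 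Since in all seven constructions every coordinate is an explicit, trivially computed function of $i$, $l(k)$, $r(k)$ and $n$, each representation is produced in linear time, which gives the ``furthermore'' clause.
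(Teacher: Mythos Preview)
Your plan is essentially the paper's own approach: for each item the authors write down an explicit coordinate formula for every interval or arc (as a simple function of $i$, $k$, $l(k)$, $r(k)$, $n$, $m$) and then verify the representation by the same type-by-type case analysis you describe; in particular, your ``close the line into a circle to save one interval'' observation is exactly how the paper derives the unit $2$-circular-interval representation from the unit $3$-interval one. The only caveat is that your sketch of the $2$-interval gadget (with $x_i=[0,\ell_i]\cup[r_i,\infty)$ and $a_k,d_k$ short-plus-long) differs in its specific layout from the paper's staircase arrangement, and you would still need to pin down where the ``long'' pieces of $a_k,b_k,c_k,d_k$ begin and end so that, e.g., all $a_k$'s meet each other and every $a_k$ meets every $b_{k'}$ with $k'\neq k$; the paper handles this by interleaving the blocks in a fixed left-to-right order and using the pair $(l(k),k)$ (or $(r(k),k)$) as a tie-breaking key, which is the one bookkeeping detail your template leaves implicit.
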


Since MAXIMUM INDEPENDENT SET is APX-hard even when restricted to
degree bounded graphs~\cite{PapaYanna,BermanFujito}, Chleb\'ik and
Chelb\'ikov\'a~\cite{ChlebChleb} observed that MAXIMUM INDEPENDENT SET
is APX-hard even when restricted to $2k$-subdivisions of 3-regular
graphs for any fixed integer $k\ge 0$. Taking the complement graphs, we
thus have that MAXIMUM CLIQUE is APX-hard even when restricted to the
set ${\mathcal C}_{2k} = \{\overline{Subd_{2k}(G)}\ |$ any graph
$G\}$, for any fixed integer $k\ge 0$. Thus, since MAXIMUM CLIQUE is
approximable for all the graph classes considered in
Theorem~\ref{thm:subd}, we clearly have the next result.

\begin{theorem}\label{thm:apx}
MAXIMUM CLIQUE is APX-complete for:
\begin{itemize}
\item 2-interval graph,
\item unit 3-interval graph,
\item 3-track graph,
\item unit 4-track graph,
\item unit 2-circular interval graph (and thus for 2-circular interval graphs), 
\item 2-circular track graph, and
\item unit 4-circular track graph.
\end{itemize}
\end{theorem}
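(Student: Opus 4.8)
The plan is to establish APX-completeness by verifying, for each of the seven classes, both membership in APX and APX-hardness, the two halves coming respectively from Theorem~\ref{thm:approx} and from Theorem~\ref{thm:subd} combined with a known APX-hardness result.

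For membership in APX, I would simply invoke Theorem~\ref{thm:approx}. Each class listed in the statement is a subclass of $t$-interval, $t$-track, $t$-circular interval, or $t$-circular track graphs for a fixed constant $t\le 4$ (the unit subclasses being contained in their non-unit counterparts), so Theorem~\ref{thm:approx} supplies a polynomial-time $t$-approximation algorithm for MAXIMUM WEIGHTED CLIQUE, and in particular a constant-factor approximation for MAXIMUM CLIQUE, placing the problem in APX on each of these classes.

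For APX-hardness, I would use an approximation-preserving reduction from MAXIMUM INDEPENDENT SET on even subdivisions of cubic graphs. The starting point is the observation of Chleb\'ik and Chleb\'ikov\'a that MAXIMUM INDEPENDENT SET stays APX-hard when restricted to $\{Subd_{2k}(G)\}$ for any fixed $k\ge 0$. Complementation is an exact, hence approximation-preserving, reduction: $S$ is independent in a graph $H$ iff $S$ is a clique in $\overline H$, so $\alpha(H)=\omega(\overline H)$, which is an L-reduction with parameters $(1,1)$. Consequently MAXIMUM CLIQUE is APX-hard on $\mathcal{C}_{2k}=\{\overline{Subd_{2k}(G)}\ |\ \text{any graph }G\}$ for every fixed $k\ge 0$, in particular on $\mathcal{C}_2$ and $\mathcal{C}_4$.

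It then suffices to note, using Theorem~\ref{thm:subd}, that each target class contains $\mathcal{C}_2$ or $\mathcal{C}_4$: $\overline{Subd_4(G)}\in\mathcal{C}_4$ is a 2-interval graph, and $\overline{Subd_2(G)}\in\mathcal{C}_2$ is simultaneously a unit 3-interval graph, a 3-track graph, a unit 4-track graph, a unit 2-circular interval graph, a 2-circular track graph, and a unit 4-circular track graph. Since APX-hardness of MAXIMUM CLIQUE on a subclass passes up to any superclass (the identity on instances is a PTAS-reduction), each of the seven classes inherits APX-hardness, and together with the membership in APX established above this yields APX-completeness throughout. The only genuinely substantial ingredient is Theorem~\ref{thm:subd}, which we assume; everything else is an assembly of standard reductions, so I anticipate no real obstacle in writing out the details.
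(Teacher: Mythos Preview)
Your proposal is correct and follows essentially the same approach as the paper: APX-membership via the constant-factor approximation of Theorem~\ref{thm:approx}, and APX-hardness via the Chleb\'ik--Chleb\'ikov\'a result on $2k$-subdivisions, complementation, and the containments of Theorem~\ref{thm:subd}. You spell out a few details (the L-reduction parameters, the passage to superclasses) that the paper leaves implicit, but the argument is the same.
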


\begin{remark}
To prove that MAXIMUM CLIQUE is NP-hard on $B_1$-VPG graphs,
Middendorf and Pfeiffer~\cite{MiddPfeiff} proved that for any graph
$G$, $\overline{Subd_2(G)} \in B_1$-VPG. One can thus see that MAXIMUM
CLIQUE is actually APX-hard for this class of graphs.
\end{remark}

We prove Theorem~\ref{thm:subd} in the following subsections.

\subsection{2-interval graphs}
\begin{theorem}\label{thm:2-int}
Given any graph $G$, $\overline{Subd_4(G)}$ is a 2-interval graph and a 2-interval
representation for it can be constructed in linear time.
\end{theorem}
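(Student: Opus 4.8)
The plan is to construct an explicit 2-interval representation of $\overline{Subd_4(G)}$ directly. Recall that $Subd_4(G)$ has vertex set consisting of the original vertices $x_1,\ldots,x_n$ together with, for each edge $e_k=x_{l(k)}x_{r(k)}$, four subdivision vertices $a_k,b_k,c_k,d_k$ forming the path $(x_{l(k)},a_k,b_k,c_k,d_k,x_{r(k)})$. In the complement, two vertices are adjacent iff they are \emph{non-adjacent} (and distinct) in $Subd_4(G)$. So the representation must force intervals to intersect in exactly the complementary pattern: the $x_i$'s should be pairwise intersecting (an independent set in $Subd_4(G)$), each $x_i$ should avoid exactly the $a_k$ with $l(k)=i$ and the $d_k$ with $r(k)=i$, the path-internal non-edges $x_{l(k)}\!-\!b_k$, $x_{l(k)}\!-\!c_k$, $x_{l(k)}\!-\!d_k$, $a_k\!-\!c_k$, $a_k\!-\!d_k$, $a_k\!-\!x_{r(k)}$, $b_k\!-\!d_k$, $b_k\!-\!x_{r(k)}$, $c_k\!-\!x_{r(k)}$ should be realized as disjointness, and all pairs of vertices lying on \emph{different} subdivision gadgets (except when they share an original endpoint) should intersect.

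**Key steps, in order.** First I would lay out $n$ "columns", one per original vertex $x_i$, and reserve in each column a long base interval $B_i$ so that the $B_i$'s all pairwise overlap — this will be the first interval of $x_i$, and it handles essentially all the "far apart" intersections automatically. Second, for each edge gadget $e_k$ I would place its six vertices' intervals inside a small dedicated window that meets the base intervals $B_i$ of \emph{every} $x_i$, so that a gadget vertex intersects every original vertex by default; then the \emph{second} interval of $x_{l(k)}$ and of $x_{r(k)}$ is used as a small "blocker" that is deliberately made to miss exactly $a_k$ (resp.\ $d_k$) while the default window-overlap handles the intersections with $b_k,c_k$ — wait, that is backwards, so instead I would arrange the gadget so that within the window the six intervals realize the path-complement of $P_6$ (the complement of a 6-vertex path is itself a well-understood small interval graph, in fact an interval graph, so a single interval per gadget vertex suffices there), and use the two available extra intervals of the two endpoints $x_{l(k)}$ and $x_{r(k)}$ to carve out the two needed non-adjacencies $x_{l(k)}\!-\!a_k$ and $x_{r(k)}\!-\!d_k$. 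Third, I would check that gadget vertices from two \emph{different} edges $e_k,e_{k'}$ intersect: since each gadget's six real-line intervals sit in a common window that is shared across gadgets, this is immediate provided the windows for distinct gadgets overlap pairwise, which I can guarantee by nesting or interleaving them. Fourth, I must verify the case where $e_k$ and $e_{k'}$ share an original endpoint: here the shared vertex $x_i$ already has both its intervals used (base + one blocker per incident edge), which is the point where the budget of 2 intervals gets tight — this is the crux.

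**Main obstacle.** The hard part will be the interval budget at original vertices of degree $\ge 2$: each $x_i$ has only two intervals total, but it must simultaneously (a) intersect all other $x_j$, (b) intersect every gadget vertex except $a_k$ for each edge $k$ with $l(k)=i$ and except $d_k$ for each edge with $r(k)=i$, and (c) intersect gadget vertices from \emph{all} other gadgets. The trick that makes two intervals suffice is that all the vertices $a_k$ (over all $k$ with $l(k)=i$) are "blocked" by a single common mechanism: I place the $a_k$-intervals of all edges incident to $x_i$ "on the left side" of $x_i$'s gadget windows and make $x_i$'s left endpoint fall to their right, so that one endpoint of one of $x_i$'s two intervals simultaneously misses \emph{all} relevant $a_k$'s; symmetrically on the right for the $d_k$'s using the other interval. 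Concretely this means the two intervals of $x_i$ are $B_i$ together with nothing extra — the blocking is achieved by where $B_i$'s endpoints land relative to the gadget windows, and the second interval of $x_i$ is free to be used (or placed trivially far away intersecting everything needed). Once I fix a global left-to-right order — base intervals first, then for each $k$ a window containing, in order, the blocked-left vertex $a_k$, then $b_k,c_k$ positioned to realize $\overline{P_6}$ among $\{x_{l(k)},a_k,b_k,c_k,d_k,x_{r(k)}\}$ restricted to what the windows control, then the blocked-right vertex $d_k$ — the verification reduces to finitely many local checks per gadget plus one global overlap check, all doable in linear time. I would close by noting the construction touches each vertex and edge a constant number of times, giving the claimed linear-time bound.
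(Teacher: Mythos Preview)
Your plan has a genuine gap at its central step. You assert that ``the complement of a 6-vertex path is itself a well-understood small interval graph, in fact an interval graph, so a single interval per gadget vertex suffices there.'' This is false: $\overline{P_6}$ contains an induced $C_4$. With the labelling $v_1=x_{l(k)},\,v_2=a_k,\,v_3=b_k,\,v_4=c_k,\,v_5=d_k,\,v_6=x_{r(k)}$, the set $\{x_{l(k)},c_k,a_k,d_k\}$ induces a 4-cycle in $\overline{Subd_4(G)}$ (the two missing diagonals are the $Subd_4(G)$-edges $x_{l(k)}a_k$ and $c_kd_k$). Hence no single-interval-per-vertex ``window'' can realize the gadget, and your scheme of reserving the second interval only for $x_i$ (and leaving the second intervals of $a_k,b_k,c_k,d_k$ essentially unused) cannot succeed. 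This is exactly why the paper needs $Subd_4$ rather than $Subd_2$ here, and why every vertex type must spend both of its intervals.

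There is also a slip in your adjacency bookkeeping: the pairs you list as ``path-internal non-edges \ldots\ should be realized as disjointness'' are non-edges of $Subd_4(G)$, hence \emph{edges} of $\overline{Subd_4(G)}$ and must be realized as \emph{intersections}. The actual disjointness constraints are the five path edges $x_{l(k)}a_k,\,a_kb_k,\,b_kc_k,\,c_kd_k,\,d_kx_{r(k)}$.

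For contrast, the paper's construction does not isolate gadgets in private windows. It lays out, for every vertex type, two intervals whose endpoints are global linear functions of $i$, $k$, $l(k)$, $r(k)$, so that the blocks for different types interleave across the whole line. Each of the five non-adjacencies on a gadget is then realized by the crossing pattern $I_1(u)<I_1(v)<I_2(u)<I_2(v)$; both intervals of $a_k,b_k,c_k,d_k$ are essential for this. The cross-gadget intersections fall out because the blocks of a given type share a common point (e.g.\ all $I_1(a_k)$ contain $0$, all $I_1(x_i)$ contain $mn$), and the ``blocking'' of $x_i$ against all $a_k$ with $l(k)=i$ (and against all $d_k$ with $r(k)=i$) is handled by making the relevant block boundaries ordered by $l(k)$ and $r(k)$ respectively. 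If you want to repair your outline, the right move is to drop the window idea and instead design, for each vertex type, two global interval families whose endpoint orderings encode $i$, $k$, $l(k)$, $r(k)$ so that the five required crossings per edge all occur.
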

\begin{proof}
Recall that each edge $e_k=x_ix_j$ of $G$ where $i<j$, corresponds to
the path $(x_i,a_k,b_k,c_k,d_k,x_j)$ in $Subd_4(G)$.  We define the
representation $\{I_1,I_2\}$ of $\overline{Subd_4(G)}$ as follows (see
also Figure~\ref{fig:2i}). For $1\leq i\leq n$ and $1\leq k\leq m$:

\medskip

$
\begin{array}{lcl}
I_1(a_k)&=&[0,m(l(k)-1)+k-1]\\
I_1(x_i)&=&[mi,mn+mi]\\
I_2(a_k)&=&[mn+ml(k)+1,4mn+m-ml(k)-k+1]\\
I_1(b_k)&=&[m(l(k)-1)+k,mn+m-k]\\
I_1(c_k)&=&[mn+m-k+1,3mn+m-mr(k)-k+1]\\
I_1(d_k)&=&[3mn+m-mr(k)-k+2,4mn+mr(k)]\\
I_2(b_k)&=&[4mn+m-ml(k)-k+2,5mn+k]\\
I_2(x_i)&=&[4mn+mi+1,5mn+mi+1]\\
I_2(d_k)&=&[5mn+mr(k)+k+1,6mn+m+1]\\
I_2(c_k)&=&[5mn+k+1,5mn+mr(k)+k]
\end{array}
$

\medskip
\begin{figure}
\center
\includegraphics[scale=0.6]{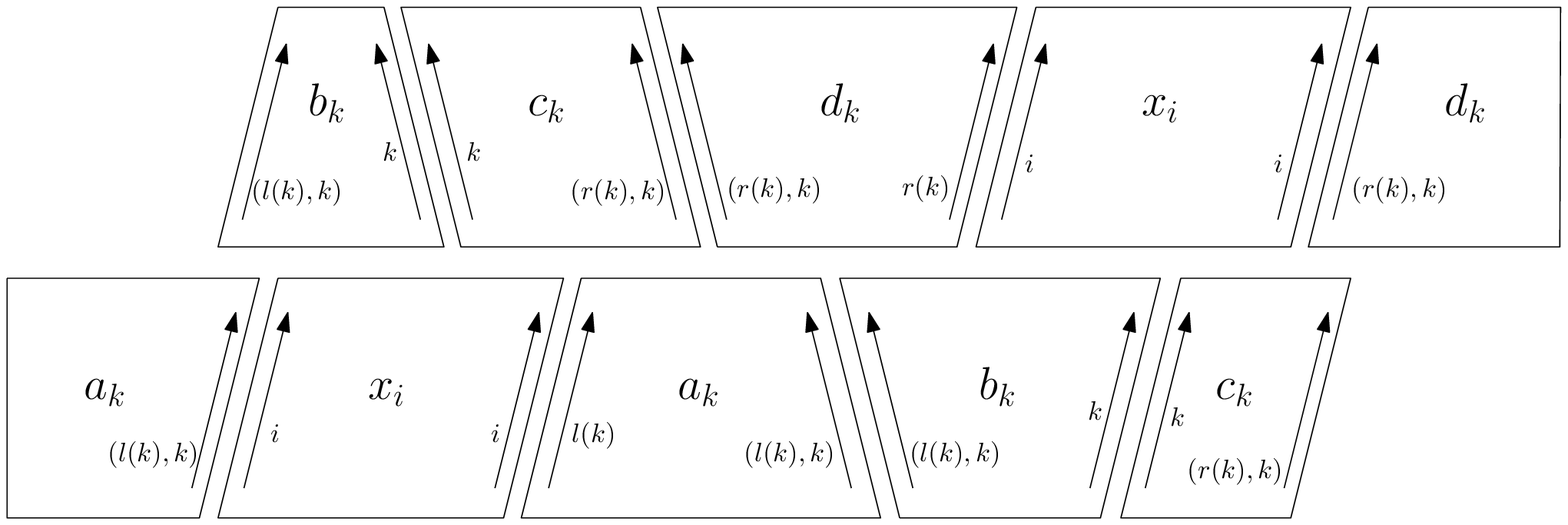}
\caption{The 2-interval representation of $\overline{Subd_4(G)}$.}
\label{fig:2i}
\end{figure}

Figure~\ref{fig:2i} (and the other figures of this kind) should be
understood in the following way.  The leftmost block labeled $a_k$
corresponds to the intervals $I_1(a_k)$, and its shape, together with
the label $(l(k),k)$ on the arrow mean that,
\begin{itemize}
\item the left end of the intervals $I_1(a_k)$ are the same (coordinate 0), and that
\item the right end of the intervals $I_1(a_k)$ are ordered (from left to right)
  accordingly to $l(k)$, and in case of equality, accordingly to $k$.
\end{itemize}
Here we can see that this block is close to the blocks $I_1(b_k)$, and
$I_1(x_i)$.  

The left end of the interval $I_1(b_k)$ is also ordered
(from left to right) accordingly to $(l(k),k)$. Such situation means
that $I_1(a_k)$ intersects every $I_1(b_{k'})$ such that $(l(k),k) >
(l(k'),k')$, i.e. such that $l(k) > l(k')$ or such that $l(k)=l(k')$
and $k>k'$. Note that since, between $I_2(a_k)$ and $I_2(b_k)$ we have
the opposite situation, for any vertex $a_k$, $a_k$ is adjacent to
every $b_{k'}$, except $b_k$.

The left end of the interval $I_1(x_i)$ is ordered
(from left to right) accordingly to $i$. Such situation means
that $I_1(a_k)$ intersects every $I_1(x_i)$ such that $l(k) > i$.
Note that since, between $I_1(x_i)$ and $I_2(a_k)$ we have
the opposite situation, for any vertex $a_k$, $a_k$ is adjacent to
every $x_i$, except $x_{l(k)}$.

We claim that $I_1$ and $I_2$ together form a valid 2-interval
representation for $\overline{Subd_4(G)}$. One can check it with
Figure~\ref{fig:2i}, but we give a full proof for this first
construction. For any two vertices $u$ and $v$ of
$\overline{Subd_4(G)}$, we will show that $uv$ is an edge of
$\overline{Subd_4(G)}$ if and only if $I_1(u)\cup I_2(u)$ intersects
$I_1(v)\cup I_2(v)$. We first consider the case where $uv$ is an edge.

\medskip

\begin{mycase}{$u=x_i$ and $v=x_j$} $[mn,mn+m]\subseteq I_1(x_i) \cap
I_1(x_j)$.\end{mycase}
\begin{mycase}{$u=x_i$ and $v=a_k$, where $l(k)\not=i$} If $l(k)>i$,
then $mi\in I_1(a_k)\cap I_1(x_i)$. If on
the other hand, $l(k)<i$, then $mn+mi\in I_1(x_i)\cap I_2(a_k)$.\end{mycase}
\begin{mycase}{$u=x_i$ and $v=b_k$} $mn\in I_1(x_i)\cap I_1(b_k)$.\end{mycase}
\begin{mycase}{$u=x_i$ and $v=c_k$} $mn+m\in I_1(x_i)\cap I_1(c_k)$.\end{mycase}
\begin{mycase}{$u=x_i$ and $v=d_k$, where $r(k)\not=i$} If $r(k)>i$,
then $4mn+mi+m\in I_1(d_k)\cap I_2(x_i)$
and if $r(k)<i$, then $5mn+mi+1\in I_2(x_i)\cap I_2(d_k)$.\end{mycase}
\begin{mycase}{$u=a_k$ and $v=a_{k'}$} $0\in I_1(a_k)\cap
I_1(a_{k'})$.\end{mycase}
\begin{mycase}{$u=a_k$ and $v=b_{k'}$, where $k\not=k'$} If
$l(k')<l(k)$, then $m(l(k)-1)\in I_1(a_k)\cap I_1(b_{k'})$
and if $l(k)<l(k')$, then $4mn-ml(k)+1\in I_2(a_k)\cap I_2(b_{k'})$.
Suppose $l(k)=l(k')$. Now, if $k'<k$,
then $m(l(k)-1)+k-1\in I_1(a_k)\cap I_1(b_{k'})$ and if $k'>k$, then
$4mn+m-ml(k)-k+1\in I_2(a_k)\cap I_2(b_{k'})$.
\end{mycase}
\begin{mycase}{$u=a_k$ and $v=c_{k'}$} $2mn+1\in I_2(a_k)\cap
I_1(c_{k'})$.\end{mycase}
\begin{mycase}{$u=a_k$ and $v=d_{k'}$} $3mn+1\in I_2(a_k)\cap
I_1(d_{k'})$.\end{mycase}
\begin{mycase}{$u=b_k$ and $v=b_{k'}$} $mn\in I_1(b_k)\cap
I_1(b_{k'})$.\end{mycase}
\begin{mycase}{$u=b_k$ and $v=c_{k'}$, where $k\not=k'$} If $k<k'$,
then $mn+m-k\in I_1(b_k)\cap I_1(c_{k'})$.
\end{mycase}
\begin{mycase}{$u=b_k$ and $v=d_{k'}$} $4mn+1\in I_2(b_k)\cap
I_1(d_{k'})$.\end{mycase}
\begin{mycase}{$u=c_k$ and $v=c_{k'}$} $[mn+m,2mn+1]\subseteq
I_1(c_k)\cap I_1(c_{k'})$.\end{mycase}
\begin{mycase}{$u=c_k$ and $v=d_{k'}$, where $k\not=k'$} If
$r(k)<r(k')$, then $3mn-mr(k)+1\in I_1(c_k)\cap I_1(d_{k'})$
and if $r(k')<r(k)$, then $5mn+mr(k)+1\in I_2(c_k)\cap I_2(d_{k'})$.
Suppose $r(k)=r(k')$. Now, if $k<k'$, $3mn-mr(k)+1
\in I_1(c_k)\cap I_1(d_{k'})$ and if $k'<k$, then $5mn+mr(k)+k\in
I_2(c_k)\cap I_2(d_{k'})$.\end{mycase}
\begin{mycase}{$u=d_k$ and $v=d_{k'}$} $6mn+m+1\in I_2(d_k)\cap
I_2(d_{k'})$.\end{mycase}

\medskip

Let us now consider the case where $uv$ is not an edge.  In
particular, let us show that $I_1(u)<I_1(v)<I_2(u)<I_2(v)$, where
$[u,u']<[v,v']$ means that $u'<v$.

\medskip

\begin{mycase}{$u=x_i$ and $v=a_k$, where $l(k)=i$}
$I_1(a_k)<I_1(x_i)<I_2(a_k)<I_2(x_i)$.\end{mycase}
\begin{mycase}{$u=x_i$ and $v=d_k$, where $r(k)=i$}
$I_1(x_i)<I_1(d_k)<I_2(x_i)<I_2(d_k)$.\end{mycase}
\begin{mycase}{$u=a_k$ and $v=b_k$}
$I_1(a_k)<I_1(b_k)<I_2(a_k)<I_2(b_k)$.\end{mycase}
\begin{mycase}{$u=b_k$ and $v=c_k$}
$I_1(b_k)<I_1(c_k)<I_2(b_k)<I_2(c_k)$.\end{mycase}
\begin{mycase}{$u=c_k$ and $v=d_k$}
$I_1(c_k)<I_1(d_k)<I_2(c_k)<I_2(d_k)$.\end{mycase}

\medskip

Therefore, we have a valid 2-interval representation of
$\overline{Subd_4(G)}$ and this representation can obviously be
constructed in linear time.\hfill\bbox
\end{proof}

\subsection{Unit 3-interval graphs}
\begin{theorem}\label{thm:unit-3-int}
Given any graph $G$, $\overline{Subd_2(G)}$ is a unit 3-interval graph and a unit 3-interval
representation for it can be constructed in linear time.
\end{theorem}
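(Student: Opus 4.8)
The plan is to mimic the structure of the $2$-interval construction for $\overline{Subd_4(G)}$, but since each edge $e_k=x_ix_j$ of $G$ now gives rise to only the shorter path $(x_{l(k)},a_k,b_k,x_{r(k)})$ in $Subd_2(G)$, we have fewer vertex types to handle per edge, which is exactly what lets us drop from $2$ unrestricted intervals to $3$ \emph{unit} intervals per vertex. The global picture I would aim for is the standard one for complements of subdivisions: the ``branch'' vertices $x_i$ should have intervals that pairwise overlap (so the $x_i$ form a clique in the complement, as they are pairwise non-adjacent in $Subd_2(G)$), the subdivision vertices $a_k$ should be mutually overlapping among themselves and overlap all $x_i$ except $x_{l(k)}$ and all $b_{k'}$ except $b_k$, and symmetrically for the $b_k$'s with respect to $x_{r(k)}$ and the $a_{k'}$'s; the only non-edges of $\overline{Subd_2(G)}$ are $x_{l(k)}a_k$, $a_kb_k$, and $b_kx_{r(k)}$, so those are the only pairs whose three unit intervals must be kept pairwise disjoint.

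Concretely, I would first fix a unit length (say $1$, after rescaling) and lay out, on the real line, $n$ ``zones'' in increasing order so that the first unit interval of $x_i$ lives near zone $i$; all the $x_i$-first-intervals would be chosen to overlap one common point, guaranteeing the $x_i$-clique. For $a_k$ I would use one unit interval placed to overlap exactly the $x_i$ with $i\neq l(k)$ — this is the classic ``two opposite staircases'' trick already used in the $2$-interval proof (an interval that overlaps all $x_i$ with $i<l(k)$ on one side and, via a second unit interval, all $x_i$ with $i>l(k)$ on the other side), which is why $a_k$ needs two of its three unit intervals just to talk to the $x_i$'s; the third unit interval of $a_k$ I would devote to the $a_k$--$a_{k'}$ and $a_k$--$b_{k'}$ adjacencies, again via a staircase in the $(l(k),k)$ order against a reversed staircase so that $a_k b_{k'}$ is an edge iff $k\neq k'$. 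The vertices $b_k$ get the mirror-image treatment with $r(k)$ in place of $l(k)$. Throughout I would write each interval as $[p,p+1]$ so that unit length is automatic, and I would choose the integer offsets between zones large compared to $m$ so that distinct staircases do not accidentally collide.

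The main obstacle — and the reason this needs care rather than being a routine copy of Theorem~\ref{thm:2-int} — is the unit-length constraint interacting with the staircases: in the $2$-interval construction the intervals were allowed arbitrary lengths, so a single long interval could simultaneously realize ``overlaps everything to the left of position $p$'' and ``overlaps everything to the right of position $p$,'' whereas with unit intervals each such ``one-sided'' requirement costs a separate unit interval, and one must check that three unit intervals per vertex genuinely suffice, i.e. that the adjacency requirements partition into at most three ``monotone'' groups per vertex type. I would verify this by the same case analysis as in Theorem~\ref{thm:2-int}: list the vertex-type pairs $(x_i,x_j)$, $(x_i,a_k)$, $(x_i,b_k)$, $(a_k,a_{k'})$, $(a_k,b_{k'})$, $(b_k,b_{k'})$, exhibit an explicit intersection point when the pair is an edge of $\overline{Subd_2(G)}$, and for the three non-edge families $x_{l(k)}a_k$, $a_kb_k$, $b_kx_{r(k)}$ show that all three unit intervals of one vertex lie strictly to the left of all three of the other (the analogue of the ``$I_1(u)<I_1(v)<I_2(u)<I_2(v)$'' chains, now a chain of six intervals). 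Once the offsets are pinned down this is a finite check and, as in the previous proof, the representation is clearly computable in linear time.
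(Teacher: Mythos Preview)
Your high-level plan---unit-length staircases indexed by $(l(k),k)$, $r(k)$, and $i$, followed by a case analysis over vertex-type pairs---is exactly the paper's approach, but the concrete allocation you sketch does not close. You budget two intervals of $a_k$ exclusively for the $a_k$--$x_i$ adjacencies and the remaining one for both the $a$-clique and all the $a_k$--$b_{k'}$ adjacencies, with $b_k$ handled by mirror symmetry. That leaves a single interval of $a_k$ against a single interval of $b_{k'}$ to encode ``adjacent iff $k\neq k'$''. This is impossible: one staircase pair of unit intervals can realise only a one-sided relation (``$k>k'$'' or ``$k<k'$''), never ``$k\neq k'$''. Concretely, if all the $I_3(a_k)$ pairwise intersect (as you need for the $a$-clique), Helly pins them to a common point, and then any unit interval for $b_{k'}$ that avoids $I_3(a_{k'})$ necessarily also misses every $I_3(a_k)$ lying on one side of $I_3(a_{k'})$. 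So your budget is one interval short on each side.

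The paper closes this by \emph{reuse}: the same two intervals $I_1(a_k),I_2(a_k)$ that form the two halves of the $a_k$--$x_i$ staircase simultaneously serve as the two halves of the $a_k$--$b_{k'}$ staircase. In each of two blocks the unit intervals are laid out in the order $b\mid a\mid x$ (all keyed to $l(\cdot)$ or $i$), so that $I_1(a_k)$ meets the $b$-staircase on its left and the $x$-staircase on its right, with the second block reversed. There is a companion trick on the $b$-side: a single interval $I_2(b_k)$, keyed to $r(k)$, is placed \emph{between} the two $x_i$-blocks and thereby handles both directions of $b_k$--$x_i$ at once; this frees $I_1(b_k)$ and $I_3(b_k)$ for the two halves of the $a$--$b$ staircase, with $I_1(b_k)$ doubling as the $b$-clique. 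This double-duty is the idea your sketch is missing. A minor further point: your non-edge check is mis-stated---``all three intervals of $u$ strictly left of all three of $v$'' cannot hold for every non-edge (chain $a_k\!-\!b_k$ and $b_k\!-\!x_{r(k)}$ and you separate $a_k$ from $x_{r(k)}$, which must be adjacent); the correct pattern is the interleaved six-term chain you mention immediately afterwards.
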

\begin{proof}
Recall that each edge $e_k=x_ix_j$ of $G$ where $i<j$, corresponds to
the path $(x_i,a_k,b_k,x_j)$ in $Subd_2(G)$.  We define $I_1$, $I_2$
and $I_3$ as follows (see also Figure~\ref{fig:u3i}). Here again,
$1\leq i\leq n$ and $1\leq k\leq m$.

\medskip

$\begin{array}{lcl}
I_1(b_k)&=&[m(l(k)-1)+k,m(l(k)-1)+m^2+k]\\
I_1(a_k)&=&[m(l(k)-1)+m^2+k+1,m(l(k)-1)+2m^2+k+1]\\
I_1(x_i)&=&[mi+2m^2+2,mi+3m^2+2]\\
I_2(b_k)&=&[mr(k)+3m^2+k+2,mr(k)+4m^2+k+2]\\
I_2(x_i)&=&[mi+4m^2+m+3,mi+5m^2+m+3]\\
I_2(a_k)&=&[ml(k)+5m^2+m+k+3,ml(k)+6m^2+m+k+3]\\
I_3(b_k)&=&[ml(k)+6m^2+m+k+4,ml(k)+7m^2+m+k+4]\\
I_3(a_k)&=&[15m^2,16m^2]\\
I_3(x_i)&=&[17m^2,18m^2]
\end{array}
$

\medskip
\begin{figure}
\center
\includegraphics[scale=0.6]{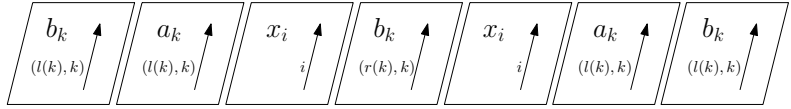}
\caption{The unit 3-interval representation of $\overline{Subd_2(G)}$.}
\label{fig:u3i}
\end{figure}

This representation can be constructed in linear time and it is easy
to verify that $I_1$, $I_2$ and $I_3$ assign intervals of length $m^2$
to the vertices of $\overline{Subd_2(G)}$. Then one can also easily
check in the figure that this is a valid unit 3-interval
representation of $\overline{Subd_2(G)}$.\hfill\bbox
\end{proof}

\com{

As before, it can be checked that $uv\in
E(\overline{Subd_2(G)})$ if and only if $\exists i,j,1\leq i,j\leq 3\colon I_i(u)\cap I_j(v)\not=
\emptyset$.

\com{
 We now show that $uv\in E(\overline{Subd_2(G)})
$ if and only if $\exists i,j,1\leq i,j\leq 3\colon I_i(u)\cap I_j(v)\not=\emptyset$.
Consider an edge $uv\in E(\overline{Subd_2(G)})$. We have the following cases.

\medskip

\begin{mycase}{$u=x_i$ and $v=x_j$} 
$17m^2\in I_3(x_i)\cap I_3(x_j)$.
\com{$3m^2+2\in I_1(x_i)\cap I_1(x_j)$.}
\end{mycase}
\begin{mycase}{$u=x_i$ and $v=a_k$, where $i\not=l(k)$} If $l(k)<i$, then
$mi+5m^2+m+3\in I_2(x_i)\cap I_2(a_k)$. If $l(k)>i$, then $mi+2m^2+2\in I_1(x_i)
\cap I_1(a_k)$.\end{mycase}
\begin{mycase}{$u=x_i$ and $v=b_k$, where $i\not=r(k)$} If $r(k)<i$, then
$mi+3m^2+2\in I_1(x_i)\cap I_2(b_k)$. If $r(k)>i$, then $mi+4m^2+m+3\in I_2(x_i)
\cap I_2(b_k)$.\end{mycase}
\begin{mycase}{$u=a_k$ and $v=a_{k'}$}
$15m^2\in I_3(a_k)\cap I_3(a_{k'})$.
\com{$2m^2+1\in I_1(a_k)\cap I_1(a_{k'})$.}
\end{mycase}
\begin{mycase}{$u=a_k$ and $v=b_{k'}$, where $k\not=k'$} If $l(k)<l(k')$, then
$m(l(k')-1)+m^2+k'\in I_1(a_k)\cap I_1(b_{k'})$. If $l(k)>l(k')$, $ml(k)+6m^2+m+
k+3\in I_2(a_k)\cap I_3(b_{k'})$. Now if $l(k)=l(k')$ we have two more cases:
when $k<k'$, $m(l(k')-1)+m^2+k'\in I_1(a_k)\cap I_1(b_{k'})$ and when $k>k'$,
$ml(k)+6m^2+m+k+3\in I_2(a_k)\cap I_3(b_{k'})$.\end{mycase}
\begin{mycase}{$u=b_k$ and $v=b_{k'}$} $m^2\in I_1(b_k)\cap I_1(b_{k'})$.\end{mycase}

\medskip

Now, consider $uv\not\in E(\overline{Subd_2(G)})$. The different cases are:

\medskip

\begin{mycase}{$u=x_i$ and $v=a_k$, where $l(k)=i$} $I_1(a_k)<I_1(x_i)<I_2(x_i)<
I_2(a_k)<I_3(a_k)<I_3(x_i)$.\end{mycase}
\begin{mycase}{$u=x_i$ and $v=b_k$, where $r(k)=i$} $I_1(b_k)<I_1(x_i)<I_2(b_k)<
I_2(x_i)<I_3(b_k)<I_3(x_i)$.\end{mycase}
\begin{mycase}{$u=a_k$ and $v=b_k$} $I_1(b_k)<I_1(a_k)<I_2(b_k)<I_2(a_k)<
I_3(b_k)<I_3(a_k)$.\end{mycase}

\medskip
}

Therefore, $I_1$, $I_2$ and $I_3$ form a valid unit 3-interval representation for
$\overline{Subd_2(G)}$ and it is easy to see that the representation can be constructed
in linear time.\hfill\bbox
\end{proof}
}

\subsection{3-track graphs}
\begin{theorem}\label{thm:3-track}
Given any graph $G$, $\overline{Subd_2(G)}$ is a 3-track graph and a
3-track representation for it can be constructed in linear time.
\end{theorem}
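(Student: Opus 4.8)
The plan is to follow the pattern of the two preceding subsections: exhibit three interval representations $I_1,I_2,I_3$ given by explicit integer coordinates, one on each of the three tracks $L_1,L_2,L_3$, and then verify by cases that for all vertices $u,v$ of $\overline{Subd_2(G)}$ the intervals of $u$ and $v$ meet on some track precisely when $uv$ is an edge.

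First I would record the adjacency pattern. Writing $X=\{x_i\}$, $A=\{a_k\}$, $B=\{b_k\}$, every pair inside $X$, inside $A$, or inside $B$ is an edge, while the only non-edges are the pairs $x_{l(k)}a_k$, $a_kb_k$ and $b_kx_{r(k)}$; equivalently $x_ia_k$ is an edge iff $i\neq l(k)$, $a_kb_{k'}$ is an edge iff $k\neq k'$, and $b_kx_i$ is an edge iff $i\neq r(k)$. The structural fact that shapes the construction is that each of the three ``type interactions'' ($X$ versus $A$, $A$ versus $B$, $B$ versus $X$) is a complete bipartite graph minus a matching, and such a bigraph is in general not an interval bigraph ($K_{3,3}$ minus a perfect matching is $C_6$, which is not an interval bigraph); hence each interaction must be split into two monotone halves placed on two distinct tracks. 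With three interactions and two halves each, three tracks then suffice, each carrying one half of two interactions in a cyclic arrangement.

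Concretely, on $L_1$ I place, left to right, a block of $X$-intervals, a block of $A$-intervals and a block of $B$-intervals, so that: each block contains a common core point (making $X$, $A$ and $B$ cliques on $L_1$); the $x_i$ and the $a_k$ reach toward each other through the left part of the $A$-block with endpoints staircased by $i$ and by $(l(k),k)$, so that $I_1(x_i)$ meets $I_1(a_k)$ iff $i<l(k)$; and the $a_k$ and the $b_{k'}$ reach toward each other through the left part of the $B$-block, staircased by $k$, so that $I_1(a_k)$ meets $I_1(b_{k'})$ iff $k<k'$ --- the two staircases being kept apart by the core point of the $A$-block, so that no $x_i$ meets any $b_k$ on $L_1$. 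Track $L_2$ is the same construction with block order rotated to $A,B,X$ (realizing $a_kb_{k'}$ for $k>k'$ and $b_kx_i$ for $i<r(k)$, with no $x$--$a$ meetings), and $L_3$ uses order $B,X,A$ (realizing $b_kx_i$ for $i>r(k)$ and $x_ia_k$ for $i>l(k)$, with no $a$--$b$ meetings). Because non-unit intervals are allowed there is ample room for all the staircases and separations, and every coordinate is a small integer, so the representation is produced in $O(n+m)$ time.

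The verification then splits as follows. Within-type edges are witnessed by the core points on $L_1$. A cross pair that is an edge is witnessed on whichever of its two designated tracks realizes the relevant direction (e.g. $x_ia_k$ on $L_1$ if $i<l(k)$ and on $L_3$ if $i>l(k)$, and analogously for the other two interactions). For the three forbidden pairs of an edge $e_k$ one checks that they are realized on no track: $x_{l(k)}a_k$ fails on $L_1$ and $L_3$ since neither $l(k)<l(k)$ nor $l(k)>l(k)$ holds, and on $L_2$ since $X$ and $A$ never meet there, and $a_kb_k$, $b_kx_{r(k)}$ are handled symmetrically. I expect the main obstacle to be not the idea but the bookkeeping in the coordinate formulas: guaranteeing that on each track the two staircases and the three block cores fit together without interference, and in particular that each vertex's single interval on a track reaches into exactly the two neighbouring blocks dictated by that track's rotation and no further. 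This is routine but delicate, and is precisely the kind of explicit computation carried out in the preceding subsections.
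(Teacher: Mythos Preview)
Your proposal is correct and follows essentially the same approach as the paper: exhibit an explicit 3-track representation with three blocks $X,A,B$ on each track, make each block a clique via a common core point, and realize each of the three cross-type ``complete bipartite minus a matching'' interactions by splitting it into two monotone staircased halves placed on two different tracks. The only difference is cosmetic: you distribute the six halves by cyclically rotating the block order ($XAB$, $ABX$, $BXA$), whereas the paper puts each type in the middle once ($AXB$, $XAB$, $ABX$); both pairings work, and the paper's explicit coordinates (with staircases keyed simply by $l(k)$, $k$, $r(k)$ rather than the lexicographic $(l(k),k)$ you suggest) confirm that the bookkeeping you anticipate is indeed routine.
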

\begin{proof}
We define a 3-track representation for $\overline{Subd_2(G)}$ as
follows (see also Figure~\ref{fig:3t}). For $1\leq i\leq n$ and $1\leq
k\leq m$:

\medskip

$
\begin{array}{lcl}
I_1(a_k)&=&[0,l(k)]\\
I_1(x_i)&=&[i+1,n+i+1]\\
I_1(b_k)&=&[n+r(k)+2,2n+3]\\
I_2(x_i)&=&[0,i]\\
I_2(a_k)&=&[l(k)+1,n+k]\\
I_2(b_k)&=&[n+k+1,m+n+2]\\
I_3(a_k)&=&[0,m+1-k]\\
I_3(b_k)&=&[m+2-k,m+r(k)]\\
I_3(x_i)&=&[m+i+1,m+n+2]
\end{array}
$

\medskip
\begin{figure}
\center
\includegraphics[scale=0.6]{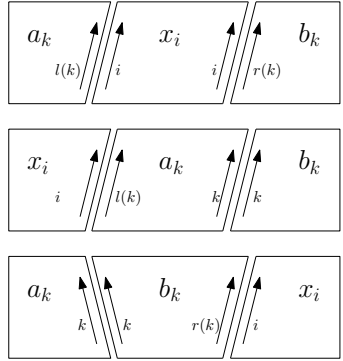}
\caption{The 3-track representation of $\overline{Subd_2(G)}$.}
\label{fig:3t}
\end{figure}

This representation can be constructed in linear time and one can
easily check in the figure that this is a valid 3-track representation
of $\overline{Subd_2(G)}$.\hfill\bbox
\end{proof}

\com{

We shall now show that $uv\in E(\overline{Subd_2(G)})$ if and only if $ \exists i, 1\leq i\leq 3\colon
I_i(u)\cap I_i(v)\not=\emptyset$. First, let us consider the cases when $uv\in E(\overline{Subd_2(G)})$.

\medskip

\begin{mycase}{$u=x_i$ and $v=x_j$} $[m+n+1,m+n+2]\subseteq I_3(x_i)\cap I_3(x_j)$.\end{mycase}
\begin{mycase}{$u=x_i$ and $v=a_k$, where $l(k)\not=i$} If $l(k)<i$, then $i\in I_2(x_i)
\cap I_2(a_k)$. Otherwise, if $l(k)>i$, then $l(k)\in I_1(a_k)\cap I_1(x_i)$.\end{mycase}
\begin{mycase}{$u=x_i$ and $v=b_k$, where $r(k)\not=i$} If $r(k)<i$, $n+i+1\in I_1(x_i)
\cap I_1(b_k)$ and if $r(k)>i$, $m+r(k)\in I_3(b_k)\cap I_3(x_i)$.\end{mycase}
\begin{mycase}{$u=a_k$ and $v=a_{k'}$} $[0,1]\subseteq I_1(a_k)\cap I_1(a_{k'})$.\end{mycase}
\begin{mycase}{$u=a_k$ and $v=b_{k'}$, where $k\not=k'$} If $k<k'$, $m+1-k\in I_3(a_k)\cap I_3(b_{k'})$
and if $k>k'$, $n+k\in I_2(a_k)\cap I_2(b_{k'})$.\end{mycase}
\begin{mycase}{$u=b_k$ and $v=b_{k'}$} $[m+n+1,m+n+2]\subseteq I_2(b_k)\cap I_2(b_{k'})$.\end{mycase}

\medskip

We have the following cases when $uv\not\in E(\overline{Subd_2(G)})$:

\medskip

\begin{mycase}{$u=x_i$ and $v=a_k$, where $l(k)=i$} $I_1(a_k)<I_1(x_i)$, $I_2(x_i)<I_2(a_k)$
and $I_3(a_k)<I_3(x_i)$.\end{mycase}
\begin{mycase}{$u=x_i$ and $v=b_k$, where $r(k)=i$} $I_1(x_i)<I_1(b_k)$, $I_2(x_i)<I_2(b_k)$
and $I_3(b_k)<I_3(x_i)$.\end{mycase}
\begin{mycase}{$u=a_k$ and $v=b_k$} $I_1(a_k)<I_1(b_k)$, $I_2(a_k)<I_2(b_k)$ and $I_3(a_k)<
I_3(b_k)$.\end{mycase}

\medskip

Thus, $I_1$, $I_2$ and $I_3$ together form a valid 3-track representation for $\overline{Subd_2(G)}$
that can be constructed in linear time.
\hfill\bbox
\end{proof}
}

\subsection{Unit 4-track graphs}
\begin{theorem}\label{thm:unit-4-track}
Given any graph $G$, $\overline{Subd_2(G)}$ is a unit 4-track graph and a unit 4-track
representation for it can be constructed in linear time.
\end{theorem}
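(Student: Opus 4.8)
The plan is to exhibit an explicit unit $4$-track representation of $\overline{Subd_2(G)}$ and verify it, exactly as was done (in the figures, or in the commented-out arguments) for Theorems~\ref{thm:3-track} and~\ref{thm:unit-3-int}. Recall that the edges of $Subd_2(G)$ — equivalently, the non-edges of $\overline{Subd_2(G)}$ — are precisely the pairs $x_{l(k)}a_k$, $a_kb_k$, $b_kx_{r(k)}$, all other pairs of distinct vertices being adjacent in $\overline{Subd_2(G)}$. So a representation has to produce, besides the three ``bulk'' cliques on $\{x_i\}_i$, on $\{a_k\}_k$ and on $\{b_k\}_k$, the conditional adjacencies $x_ia_k$ for $i\neq l(k)$, $x_ib_k$ for $i\neq r(k)$, and $a_kb_{k'}$ for $k\neq k'$.

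The key idea is to reuse the endpoint patterns from the proof of Theorem~\ref{thm:3-track}, but scaled up to a common large arc length $\ell$ (take $\ell=10N$ with $N:=m+n+1$), always inflating each arc in the direction of its \emph{non-binding} endpoint, so that the comparison it is responsible for is not disturbed. Concretely, on track~$1$ I would set $I_1(x_i)=[2i,\,2i+\ell]$, place the $a_k$'s far to the left with right endpoint $2l(k)-1$, and place the $b_k$'s far to the right with left endpoint $2r(k)+\ell+1$; then $I_1(x_i)$ meets $I_1(a_k)$ iff $i<l(k)$ and meets $I_1(b_k)$ iff $i>r(k)$, the $x$'s (resp.\ $a$'s, resp.\ $b$'s) pairwise overlap, and the $a$-arcs are disjoint from the $b$-arcs. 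Track~$2$ is the mirror image ($a$'s far right with left endpoint $2l(k)+\ell+1$, $b$'s far left with right endpoint $2r(k)-1$), giving $I_2(x_i)\cap I_2(a_k)\neq\emptyset$ iff $i>l(k)$ and $I_2(x_i)\cap I_2(b_k)\neq\emptyset$ iff $i<r(k)$, again with $a$-arcs disjoint from $b$-arcs. Together, tracks~$1$ and~$2$ realise all the conditional adjacencies between the $x$'s and the $a$'s and $b$'s, plus the three bulk cliques.

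Tracks~$3$ and~$4$ are devoted solely to the conditional adjacency $a_kb_{k'}$ for $k\neq k'$, with all $x$-arcs parked at one common location far from everything else. On track~$3$, mimicking track~$3$ of Theorem~\ref{thm:3-track}, I would give $a_k$ the right endpoint $m+1-k$ inflated leftward and $b_{k'}$ the left endpoint $m+2-k'$ inflated rightward, so that $I_3(a_k)\cap I_3(b_{k'})\neq\emptyset$ iff $k<k'$; track~$4$ is the analogous arrangement (for instance $a_k$ with right endpoint $2k$ inflated left, $b_{k'}$ with left endpoint $2k'+1$ inflated right) realising $a_k\sim b_{k'}$ iff $k>k'$. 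On each of tracks~$3$ and~$4$ the $a$'s pairwise overlap, the $b$'s pairwise overlap, and, crucially, $I_i(a_k)$ and $I_i(b_k)$ are disjoint for every $k$. A fourth track is unavoidable here: a single unit track cannot in general realise an adjacency of the form ``meet iff the two indices differ'' (if the ambient edge set forces the $x_i$'s — or the $a_k$'s — close together, a unit arc meets either all of them or none), so each conditional adjacency splits into a ``$<$''-half and a ``$>$''-half, and the $a$-$b$ conditional cannot be folded onto tracks~$1$ or~$2$, where the $a$-arcs and $b$-arcs sit on opposite far sides and never meet.

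Putting the four tracks together, one checks that every edge of $\overline{Subd_2(G)}$ is witnessed on some track while each of the three forbidden pairs is avoided on all four tracks. The step I expect to demand the most care is exactly this last bookkeeping — choosing every inflation direction so that lengthening the arcs to the common length $\ell$ never makes $x_{l(k)}$ meet $a_k$, nor $a_k$ meet $b_k$, nor $b_k$ meet $x_{r(k)}$, on any track — but once the endpoints are pinned down it is the same finite case analysis as for Theorem~\ref{thm:3-track}, and the representation is clearly computable in linear time.
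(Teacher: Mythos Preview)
Your proposal is correct and follows essentially the same construction as the paper: tracks~1 and~2 carry the blocks $a_k$--$x_i$--$b_k$ (with the roles of $a$ and $b$ swapped on track~2) to realise the conditionals $x_ia_k\Leftrightarrow i\neq l(k)$ and $x_ib_k\Leftrightarrow i\neq r(k)$, while tracks~3 and~4 handle the two halves of $a_kb_{k'}\Leftrightarrow k\neq k'$ with all $x$-intervals parked at a common remote spot. The paper writes down explicit coordinates with common length $m^2$ rather than deriving them by ``inflating'' the $3$-track construction, but the resulting representation has the same block structure and the same case analysis.
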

\begin{proof}
We define $I_1$, $I_2$, $I_3$ and $I_4$ as follows (see
also Figure~\ref{fig:u4t}). As usual, $1\leq i\leq n$
and $1\leq k\leq m$.

\medskip

$\begin{array}{lcl}
I_1(a_k)&=&[m(l(k)-1)+k,m(l(k)-1)+m^2+k]\\
I_1(x_i)&=&[mi+m^2+1,mi+2m^2+1]\\
I_1(b_k)&=&[2m^2+mr(k)+k+1,3m^2+mr(k)+k+1]\\
I_2(b_k)&=&[m(r(k)-1)+k,m(r(k)-1)+m^2+k]\\
I_2(x_i)&=&[mi+m^2+1,mi+2m^2+1]\\
I_2(a_k)&=&[2m^2+ml(k)+k+1,3m^2+ml(k)+k+1]\\
I_3(a_k)&=&[k,k+m^2]\\
I_3(b_k)&=&[k+m^2+1,k+2m^2+1]\\
I_3(x_i)&=&[5m^2,6m^2]\\
I_4(b_k)&=&[k,k+m^2]\\
I_4(a_k)&=&[k+m^2+1,k+2m^2+1]\\
I_4(x_i)&=&[5m^2,6m^2]
\end{array}
$

\medskip
\begin{figure}
\center
\includegraphics[scale=0.6]{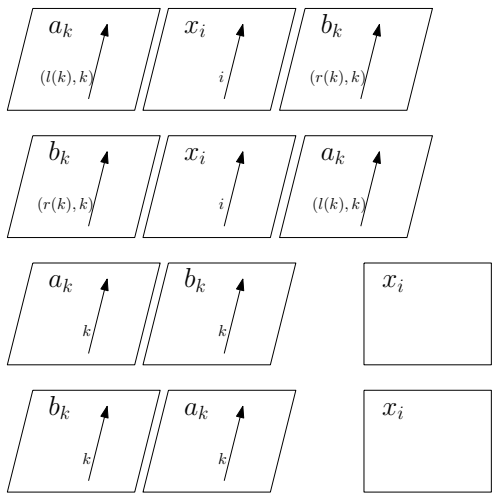}
\caption{The unit 4-track representation of $\overline{Subd_2(G)}$.}
\label{fig:u4t}
\end{figure}

This representation can be constructed in linear time and it is easy
to verify that $I_1$, $I_2$, $I_3$ and $I_4$ assign intervals of length $m^2$
to the vertices of $\overline{Subd_2(G)}$. Then one can also easily
check in the figure that this is a valid unit 4-track
representation of $\overline{Subd_2(G)}$.\hfill\bbox
\end{proof}

\com{

It can be verified that $uv\in E(\overline{Subd_2(G)})
$ if and only if $\exists i, 1\leq i\leq 4\colon I_i(u)\cap I_i(v)\not=\emptyset$.
\com{
 We will show that $uv\in E(\overline{Subd_2(G)})$ if and only if 
$\exists i, 1\leq i\leq 4\colon I_i(u)\cap I_i(v)\not=\emptyset$.
For $uv\in E(\overline{Subd_2(G)})$, we have the following cases.

\medskip

\begin{mycase}{$u=x_i$ and $v=x_j$}
$5m^2\in I_3(x_i)\cap I_3(x_j)$.
\com{$2m^2+1\in I_1(x_i)\cap I_1(x_j)$.}
\end{mycase}
\begin{mycase}{$u=x_i$ and $v=a_k$, where $i\not=l(k)$} If $i<l(k)$, $mi+m^2+1\in
I_1(x_i)\cap I_1(a_k)$ and if $i>l(k)$, $mi+2m^2+1\in I_2(x_i)\cap I_2(a_k)$.\end{mycase}
\begin{mycase}{$u=x_i$ and $v=b_k$, where $i\not=r(k)$} If $i<r(k)$, $mi+m^2+1\in
I_2(x_i)\cap I_2(b_k)$ and if $i>r(k)$, $mi+2m^2+1\in I_1(x_i)\cap I_1(b_k)$.\end{mycase}
\begin{mycase}{$u=a_k$ and $v=a_{k'}$} $m\in I_3(a_k)\cap I_3(a_{k'})$.\end{mycase}
\begin{mycase}{$u=a_k$ and $v=b_{k'}$, where $k\not=k'$} If $k<k'$, $k+m^2+1\in
I_4(a_k)\cap I_4(b_{k'})$ and if $k>k'$, $k+m^2\in I_3(a_k)\cap I_3(b_{k'})$.\end{mycase}
\begin{mycase}{$u=b_k$ and $v=b_{k'}$} $m\in I_4(b_k)\cap I_4(b_{k'})$.\end{mycase}

\medskip

For $uv\not\in E(\overline{Subd_2(G)})$, one of the following cases is true.

\medskip

\begin{mycase}{$u=x_i$ and $v=a_k$, where $i=l(k)$} $I_1(a_k)<I_1(x_i)$,
$I_2(x_i)<I_2(a_k)$, $I_3(a_k)<I_3(x_i)$ and $I_4(a_k)<I_4(x_i)$.\end{mycase}
\begin{mycase}{$u=x_i$ and $v=b_k$, where $i=r(k)$} $I_1(x_i)<I_1(b_k)$,
$I_2(b_k)<I_2(x_i)$, $I_3(b_k)<I_3(x_i)$ and $I_4(b_k)<I_4(x_i)$.\end{mycase}
\begin{mycase}{$u=a_k$ and $v=b_k$} $I_1(a_k)<I_1(b_k)$, $I_2(b_k)<I_2(a_k)$,
$I_3(a_k)<I_3(b_k)$ and $I_4(b_k)<I_4(a_k)$.\end{mycase}

\medskip
}
Therefore, $I_1$, $I_2$, $I_3$ and $I_4$ together form a valid unit 4-track
representation of $\overline{Subd_2(G)}$ that can be constructed in linear time.
\hfill\bbox
\end{proof}
}

\subsection{Unit 2-circular interval graphs}
\begin{theorem}\label{thm:unit-2-circint}
Given any graph $G$, $\overline{Subd_2(G)}$ is a unit 2-circular interval graph
and a unit 2-circular interval representation for it can be constructed in
linear time.
\end{theorem}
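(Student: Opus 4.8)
The plan is to proceed exactly as in the preceding subsections: write down an explicit representation of $\overline{Subd_2(G)}$ on a single circle $C$ in which every vertex receives two arcs, all of the same length, and then read off correctness from the accompanying figure together with a short case check (as in the proofs of Theorems~\ref{thm:2-int} and~\ref{thm:unit-3-int}). Since the representation is given by closed formulas for the arc endpoints, the linear-time claim is immediate, so the whole task is to find the formulas.

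It helps to first isolate what the representation must achieve. Writing $X=\{x_i : 1\le i\le n\}$, $A=\{a_k : 1\le k\le m\}$ and $B=\{b_k : 1\le k\le m\}$, each of $X$, $A$, $B$ is a clique of $\overline{Subd_2(G)}$, and the only non-edges are the three matchings $x_{l(k)}a_k$, $a_kb_k$ and $b_kx_{r(k)}$ for $1\le k\le m$. Thus the two equal-length arcs of each vertex must simultaneously (i)~make $X$, $A$ and $B$ pairwise intersecting; (ii)~make $a_k$ meet every $x_i$ with $i\ne l(k)$ but miss $x_{l(k)}$, and symmetrically $b_k$ meet every $x_i$ with $i\ne r(k)$ but miss $x_{r(k)}$; (iii)~make $a_k$ meet every $b_{k'}$ with $k'\ne k$ but miss $b_k$. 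The unit $3$-interval representation of Theorem~\ref{thm:unit-3-int} already realizes all of this with three intervals per vertex, and $2$-circular interval graphs are contained in $3$-interval graphs, so the real content here is to save one arc by folding that linear representation around a circle.

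A natural way to do the folding is to reuse the ``two-sided'' device of Theorem~\ref{thm:unit-3-int}: order the $a_k$'s by the pair $(l(k),k)$, the $b_k$'s by $(r(k),k)$ and the $x_i$'s by $i$, and give $a_k$ one arc whose right end threads this ordering just short of $x_{l(k)}$ (so it meets exactly the $x_i$ with $i<l(k)$) and a second arc whose left end threads just past $x_{l(k)}$ (so it meets exactly the $x_i$ with $i>l(k)$); do the mirror construction for $b_k$ relative to $r(k)$. These same two arcs are positioned so that all $a$-arcs mutually overlap and all $b$-arcs mutually overlap, which yields the $A$- and $B$-cliques, and so that each $a_k$-arc overlaps the appropriate $b_{k'}$-arcs. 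The job that the constant far-away interval $I_3$ did in the line construction (pinning down the $X$-clique and half of the $A$--$B$ adjacencies) is then absorbed by wrapping: the ``far zone'' and one of the two working zones become a single arc once the line is closed up, and one chooses the circle length $L$ relative to the common arc length $\lambda$ and to $m$ and $n$ so that every arc still has length exactly $\lambda$ while the unused stretch of $C$ is parked where it creates no intersection. One then writes out the explicit endpoint formulas and draws the figure.

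The verification is routine, in the same spirit as the proof of Theorem~\ref{thm:2-int}: for each pair type — $x_ix_j$, $x_ia_k$ (according as $i=l(k)$ or not), $x_ib_k$, $a_ka_{k'}$, $b_kb_{k'}$, and $a_kb_{k'}$ (according as $k=k'$ or not) — exhibit a point common to two arcs when the pair is an edge, and check that all four arc-versus-arc pairs are disjoint when it is not. I expect the genuine obstacle to be exactly this last check: on a circle, two arcs that are meant to be disjoint can accidentally meet on the side ``behind'' the clusters, so one must be certain that the chosen orderings and the slack built into $L$ really do separate $a_k$ from $b_k$, $a_k$ from $x_{l(k)}$ and $b_k$ from $x_{r(k)}$ while leaving every other pair intersecting — and that no two arcs of the same vertex, now forced to be short rather than ``the whole circle minus a gap'', accidentally create or destroy an adjacency. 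Once $L$, $\lambda$ and the endpoint formulas are pinned down, this reduces to a finite, mechanical arithmetic check, which is why it can safely be left to the figure.
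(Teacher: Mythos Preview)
Your plan is essentially the paper's proof. The paper takes the unit $3$-interval representation of Theorem~\ref{thm:unit-3-int} verbatim, places it on a circle of circumference $6m^2+2m+4$, and fuses the two $b$-intervals that were adjacent in the cyclic order --- $I_1(b_k)$ and $I_3(b_k)$ --- into a single arc wrapping through the origin; the constant intervals $I_3(a_k)$ and $I_3(x_i)$ are simply discarded, since the $A$- and $X$-cliques are already witnessed by the overlapping $I_1$ blocks. Your description of ``the far zone and one of the two working zones becom[ing] a single arc'' is exactly this fusion; the only refinement is that not all of $I_3$ is merged --- only the non-constant $I_3(b_k)$ block is, while the two constant $I_3$ blocks are dropped outright. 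With that one clarification, your verification plan (the same six pair-types, read off from the figure) matches the paper's.
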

\begin{proof}
Let $C$ be a circle of circumference $6m^2+2m+4$. The mappings $I_1$
and $I_2$, which map $V(G)$ to arcs on this circle, are defined as
follows (see also Figure~\ref{fig:u2ci}).

\medskip

$
\begin{array}{lcl}
I_1(b_k)&=&[ml(k)+6m^2+m+k+4,m(l(k)-1)+m^2+k]\\
I_1(a_k)&=&[m(l(k)-1)+m^2+k+1,m(l(k)-1)+2m^2+k+1]\\
I_1(x_i)&=&[mi+2m^2+2,mi+3m^2+2]\\
I_2(b_k)&=&[mr(k)+3m^2+k+2,mr(k)+4m^2+k+2]\\
I_2(x_i)&=&[mi+4m^2+m+3,mi+5m^2+m+3]\\
I_2(a_k)&=&[ml(k)+5m^2+m+k+3,ml(k)+6m^2+m+k+3]
\end{array}
$

\medskip
\begin{figure}
\center
\includegraphics[scale=0.6]{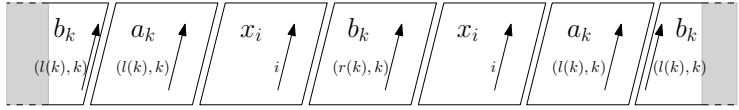}
\caption{The unit 2-circular interval representation of $\overline{Subd_2(G)}$.}
\label{fig:u2ci}
\end{figure}

Note that this representation is almost the same as the unit 3-interval
representation given for $\overline{Subd_2(G)}$ in the proof of
Theorem~\ref{thm:unit-3-int}, the only difference being that $I_1(b_k)$ and
$I_3(b_k)$ have now been fused to form $I_1(b_k)$ of the unit 2-circular
interval representation being constructed.
This representation can be constructed in linear time and it is easy
to verify that the arcs have length $m^2$. Then one can also easily
check in the figure that this is a valid unit 2-circular interval
representation of $\overline{Subd_2(G)}$.\hfill\bbox
\end{proof}

\com{

It can be easily verified that $I_1$ and $I_2$ map the vertices of
$\overline{Subd_2(G)}$ to arcs of length $m^2$ on the circle $C$ and that two
vertices $u$ and $v$ of $\overline{Subd_2(G)}$ are adjacent if and only if $\exists
i,j$ such that $I_i(u)\cap I_j(v)\not=\emptyset$. Thus, $I_1$, $I_2$ and $C$
form a valid unit 2-circular interval representation of $\overline{Subd_2(G)}$ that
can be constructed in linear time.
\hfill\bbox
\end{proof}
}

\subsection{2-circular track graphs}
\begin{theorem}\label{thm:2-circtrack}
Given any graph $G$, $\overline{Subd_2(G)}$ is a 2-circular track graph and a
2-circular track representation for it can be constructed in linear time.
\end{theorem}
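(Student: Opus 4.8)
The plan is to mirror the strategy already used for the other constructions in this section: exhibit an explicit representation of $\overline{Subd_2(G)}$ using two parallel circles $C_1$ and $C_2$, with one arc per vertex on each circle, and then verify that two vertices are adjacent in $\overline{Subd_2(G)}$ precisely when their arcs meet on at least one of the two circles. Recall that each edge $e_k=x_ix_j$ with $i<j$ is replaced by the path $(x_{l(k)},a_k,b_k,x_{r(k)})$, so the only non-edges of $\overline{Subd_2(G)}$ are the pairs $\{x_{l(k)},a_k\}$, $\{a_k,b_k\}$, and $\{b_k,x_{r(k)}\}$; every other pair of vertices must be made to intersect. The natural starting point is the unit $4$-track representation from Theorem~\ref{thm:unit-4-track}: there tracks $I_1,I_2$ handle the $x$–$a$ and $x$–$b$ adjacencies (ordered by $l(k)$ resp.\ $r(k)$ against the index $i$), while tracks $I_3,I_4$ handle the $a$–$b$ adjacencies (ordered by $k$), with the non-edges falling out because on each track the three offending arcs are arranged in a short left-to-right chain. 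The idea is to fuse pairs of these tracks into single circles: on circle $C_1$ we wrap track $I_1$ and track $I_3$ around so that an arc's ``far'' end on one linear layout becomes adjacent to its ``near'' end on the other, exactly as the paper already did when fusing $I_1(b_k)$ and $I_3(b_k)$ to pass from the unit $3$-interval to the unit $2$-circular interval representation in Theorem~\ref{thm:unit-2-circint}. Doing this for both halves gives a $2$-circular track representation; note the arcs need not remain unit length, which is why the theorem only claims a $2$-circular track (not unit) representation.

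Concretely, I would write down, for each vertex, one arc on $C_1$ and one arc on $C_2$, choosing circumferences large enough (on the order of $m+n$ or $m^2$, following the earlier constructions) that the only coincidences forced are the intended ones. On $C_1$ I would place, going clockwise, the $a_k$-arcs anchored at a common point and extending by an amount increasing in $l(k)$ (then in $k$), then the $x_i$-arcs ordered by $i$, then the $b_k$-arcs, and then close the circle so that the $b_k$-arcs' tails reach back to overlap the $a_{k'}$-arcs in the pattern opposite to the one on the first portion — this is what reproduces, on a single circle, the combined effect of tracks $I_1$ and $I_3$. Circle $C_2$ is built symmetrically with the roles of $l$ and $r$, and of $a$ and $b$, interchanged, so that it combines tracks $I_2$ and $I_4$. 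Then I would run through the same short case analysis as in the earlier proofs: for each type of edge ($x_i x_j$, $x_i a_k$ with $l(k)\ne i$, $x_i b_k$ with $r(k)\ne i$, $a_k a_{k'}$, $a_k b_{k'}$ with $k\ne k'$, $b_k b_{k'}$) I exhibit a point of one circle lying in both arcs, splitting $x_i a_k$ and $x_i b_k$ into the sub-cases $l(k)<i$ / $l(k)>i$ (resp.\ $r(k)<i$ / $r(k)>i$) and $a_k b_{k'}$ into $k<k'$ / $k>k'$ (with the equal-index sub-case handled by the secondary ordering), and for each of the three non-edge types I check that on every circle the two arcs are disjoint.

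The main obstacle I expect is the wrap-around bookkeeping on each circle: once a track is bent into a circle, the ``$<$'' relation among arcs is no longer a total order, so I must be careful that closing the circle to create the desired $b_k$–$a_{k'}$ overlaps does not accidentally create a spurious $x_i$–$x_j$-type coincidence at the seam, or glue together one of the three forbidden pairs. Concretely I need the gap between the rightmost $x_i$-arc's right end and the leftmost $a_k$-arc's left end (as one comes back around) to be strictly larger than the length of any single arc, while simultaneously the $b_k$-arcs must be long enough to reach the $a_{k'}$-arcs; choosing the circumference and the per-arc lengths to satisfy both constraints at once is the delicate part, and it is exactly the kind of inequality that the explicit coordinate formulas in the earlier theorems were engineered to guarantee. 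Once the coordinates are pinned down the verification is entirely routine and linear-time, matching the style of Theorems~\ref{thm:unit-3-int}–\ref{thm:unit-2-circint}, so as in those proofs I would present the explicit arc list together with a figure and only sketch the case check. \hfill\bbox
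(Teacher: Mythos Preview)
Your overall plan --- write down explicit arcs on two circles and run the usual case check --- is exactly what the paper does, but the concrete construction there is quite different and considerably simpler than what you propose. Rather than fusing pairs of tracks from the unit $4$-track representation, the paper gives the arcs directly on two circles of circumference at least $3n+1$:
\[
\begin{array}{l@{\qquad}l@{\qquad}l}
I_1(x_i)=[i,\,i+n], & I_1(a_k)=[l(k)+n+1,\,l(k)+2n], & I_1(b_k)=[l(k)+2n+1,\,r(k)-1],\\
I_2(x_i)=[i,\,i+n], & I_2(b_k)=[r(k)+n+1,\,r(k)+2n], & I_2(a_k)=[r(k)+2n+1,\,l(k)-1],
\end{array}
\]
the last arc on each line being the one that wraps. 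Notice that the edge index $k$ never appears: everything is keyed by $l(k)$, $r(k)$, $i$ alone, the two circles are related by the simple involution $a\leftrightarrow b$, $l\leftrightarrow r$, and all coordinates are $O(n)$ rather than $O(m^2)$. Your route, by contrast, inherits the $k$-ordering from tracks $I_3,I_4$, so each arc has to carry two unrelated sort keys; what this buys you is that the $a_k$--$b_{k'}$ case is governed by $k$ versus $k'$ directly, which makes that part of the verification mechanical.

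There is also a small wrinkle in your description that the paper's layout sidesteps. If the $a_k$-arcs are literally ``anchored at a common point'' (a common left endpoint), then the wrap-around meeting of $b_{k'}$'s tail with the $a_k$-block cannot be made selective: any $b_{k'}$ whose tail reaches the anchor point meets \emph{every} $a_k$. To carry out your fusing you would have to let \emph{both} ends of $a_k$ vary --- the forward end by $l(k)$ for the $a$--$x$ interface, the back end by $k$ for the $b$--$a$ wrap --- which is doable but is not what you wrote. The paper's layout avoids this issue by putting the $a$--$b$ interface on the linear side of the circle (between the $a$-block and the $b$-block, both keyed by $l(\cdot)$ on $C_1$) and reserving the wrap-around for the $b$--$x$ interface, so that each arc endpoint carries only a single key.
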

\begin{proof}
We define a 2-circular track representation using circles $C_1$ and
$C_2$, each having circumference at least $3n+1$, and mappings $I_1$
and $I_2$ defined as follows (see also Figure~\ref{fig:2ct}).

\medskip

$
\begin{array}{lcl}
I_1(x_i)=[i,i+n]\\
I_1(a_k)=[l(k)+n+1,l(k)+2n]\\
I_1(b_k)=[l(k)+2n+1,r(k)-1]\\
I_2(x_i)=[i,i+n]\\
I_2(b_k)=[r(k)+n+1,r(k)+2n]\\
I_2(a_k)=[r(k)+2n+1,l(k)-1]
\end{array}
$

\medskip
\begin{figure}
\center
\includegraphics[scale=0.6]{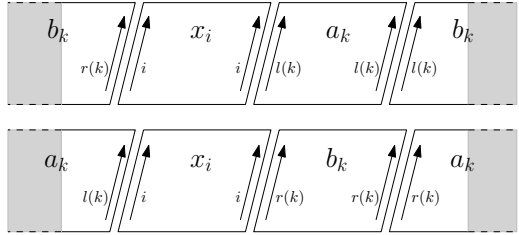}
\caption{The 2-circular track representation of $\overline{Subd_2(G)}$.}
\label{fig:2ct}
\end{figure}

Clearly, this representation can be constructed in linear time, and as
before, it can be checked that the circles $C_1$ and $C_2$ together
with the mappings $I_1$ and $I_2$ form a valid 2-circular track
representation of $\overline{Subd_2(G)}$.\hfill\bbox
\end{proof}

\section{NP-hardness in unit 2-interval and unit 3-track graphs}
\label{sec:np-h}

\com{
Let us first define some terminology for this section.
We refer to $Subd_w(G)$ as a \emph{$w$-subdivision} of $G$. Given a graph $G$,
a graph obtained by subdividing each of its edges an odd number of times (or
equivalently, replacing each edge with a path of even length) is called an
\emph{odd subdivision} of $G$. Similarly, the graph obtained by subdividing
each edge of $G$ an even number of times (or equivalently, replacing each
edge with an odd length path) is called an \emph{even subdivision} of $G$.
}

Valiant~\cite{Valiant} has shown that every planar graph of degree at
most 4 can be drawn on a grid of linear size such that the vertices
are mapped to points of the grid and the edges to piecewise linear
curves made up of horizontal and vertical line segments whose
endpoints are also points of the grid. It is immediately clear that
every planar graph $G$ has a subdivision $G'$ that is an induced
subgraph of a grid graph such that each edge of $G$ corresponds to a
path of length at most $O(|V(G)|^2)$ (see Figure~\ref{fig:embed}).
Note that here, some paths have even length and some have odd length.
An {\em even subdivision} (resp. {\em odd subdivision}) of $G$ is a
graph obtained from $G$ by subdividing each edge $e$ of $G$ an even
(resp. odd) number of times, and at most $|V(G)|^{O(1)}$ times.

\begin{figure}
\center
\begin{tabular}{cp{0.5in}c}
\includegraphics[scale=0.5]{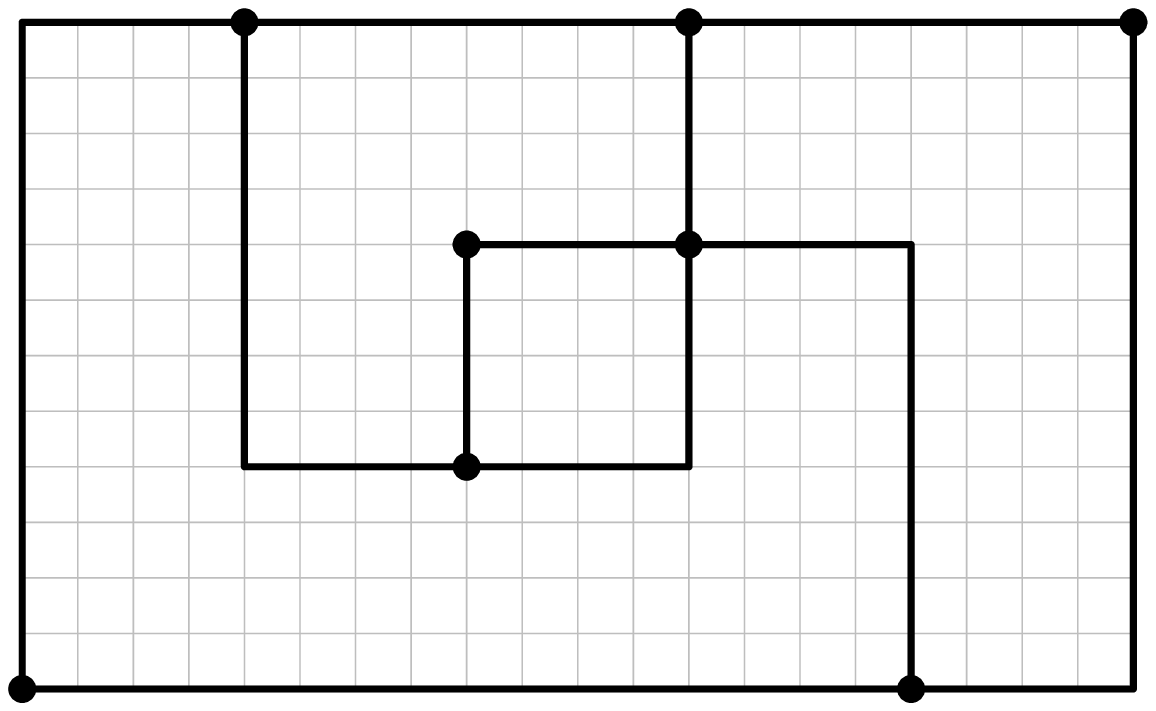}&&\includegraphics[scale=0.5]{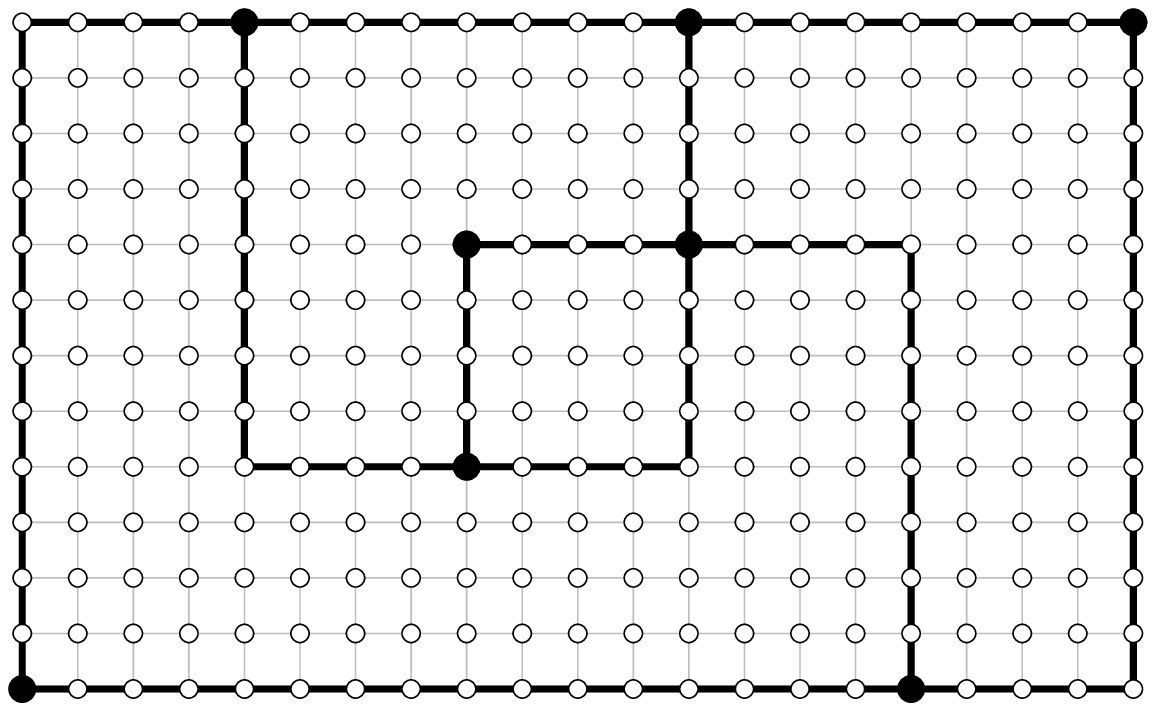}
\end{tabular}
\caption{Embedding a planar graph in a grid.}
\label{fig:embed}
\end{figure}
Note that for any integer $k$, we can embed $G$ in a fine enough grid so that
every horizontal and vertical segment in the original drawing of $G$
becomes a path that contains at least $k$ vertices in $G'$. In
Figure~\ref{fig:embed}, we have chosen $k=5$.

Let $R(w,h)$ be the rectangular grid of height $h$ and width $w$.
A path in $R(w,h)$ that contains only vertices from one row of the grid is called
a \emph{horizontal grid-path} and one that contains vertices from only one
column is called a \emph{vertical grid-path}. We denote
by $R'(w,h)$ the graph obtained by subdividing each edge of $R(w,h)$ once
and by adding paths of length 3 between the newly introduced vertices as shown
in Figure~\ref{fig:subdiv}.

\begin{figure}
\center
\includegraphics[scale=.6]{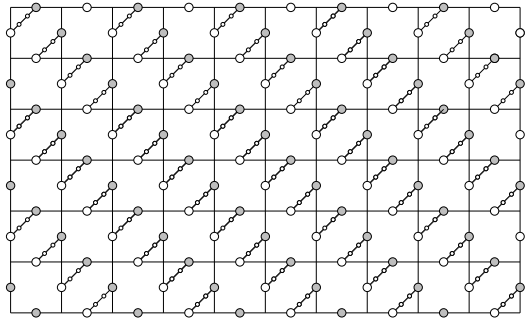}
\caption{The graph $R'(11,7)$. The vertices of the grid are not shown.}
\label{fig:subdiv}
\end{figure}
 
\begin{lemma}
\label{lem:R'}
Any planar graph $G$, on $n$ vertices and of maximum degree
4, has an even subdivision that is an induced subgraph of $R'(w,h)$ for
some values of $w$ and $h$ that are linear in $n$.
\end{lemma}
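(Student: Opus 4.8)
The goal is to show every planar graph $G$ of maximum degree $4$ has an even subdivision that embeds as an induced subgraph of $R'(w,h)$ with $w,h$ linear in $n$. The plan is to start from Valiant's grid drawing of $G$: by the remark preceding the lemma, we may take a grid $R(w,h)$ with $w,h = O(n)$ in which vertices of $G$ map to grid points and each edge maps to a rectilinear path, and moreover we may refine the grid so that each maximal horizontal or vertical segment of each such path contains at least $k$ vertices, for a constant $k$ to be chosen later (we will need $k$ large enough — something like $k \ge 5$ — so that there is room to adjust parities and to route around the extra length-$3$ pendant paths of $R'$). Refining by a constant factor keeps $w,h$ linear in $n$.

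First I would fix notation for $R'(w,h)$: it is $R(w,h)$ with each edge subdivided once, plus, between consecutive subdivision vertices along a row (and along a column), a pendant path of length $3$ forming a "detour" — see Figure~\ref{fig:subdiv}. The key structural observation is that within $R'$ one has, between any two nearby grid vertices on a common row or column, two internally disjoint paths: the "straight" one (through the single subdivision vertex) of some fixed length, and the "detour" one (through the length-$3$ path) which is longer by a fixed even amount. This gives a local parity gadget: along any grid-path we can independently choose, segment by segment, to add a fixed even number of extra vertices. Combined with the freedom to start from a drawing whose segments are already long (at least $k$), this lets us tune the total length of the path representing each edge $e$ of $G$ to have the correct parity, namely to make it an even subdivision of that edge, i.e. a path of odd length between the two endpoints — so the subdivision is even in the sense defined (each edge subdivided an even number of times). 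The number of times is $O(w+h) = O(n)$, well within the $|V(G)|^{O(1)}$ budget.

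The steps, in order, are: (1) invoke Valiant and the refinement remark to get a rectilinear drawing of $G$ in $R(w,h)$, $w,h=O(n)$, with every straight segment of length $\ge k$; (2) describe how a rectilinear path in $R$ lifts to a path in $R'$ — each grid edge used becomes the straight two-edge connector, and turns at grid vertices are handled by noting the grid vertices of $R(w,h)$ are also vertices of $R'(w,h)$; (3) verify that the resulting paths, together with the vertices of $G$, form an \emph{induced} subgraph of $R'$ — this requires that distinct edge-paths of $G$ use disjoint sets of grid edges and do not come within distance $1$ in $R'$, which follows from choosing the refinement fine enough that the drawing has all segments separated, plus the observation that the pendant length-$3$ detours of an unused grid edge are never adjacent to a used straight connector except through shared grid vertices (and we ensure no two edge-paths share a grid vertex that is not an endpoint of $G$); (4) for each edge $e$ of $G$, compute the parity of the length of its current lifted path and, if it is wrong, replace one straight connector along a sufficiently long segment by the length-$3$ detour, changing the length by a fixed even amount and hence not the parity — wait: to change parity we instead need the detour to differ by an \emph{odd} amount, so here I would check the exact lengths in the definition of $R'$ (subdivided once gives connectors of length $2$; the detour has length $\,2+3-1 = 4$ or is a genuinely different parity) and, if the straight/detour difference is even, obtain an odd change by combining it with a one-vertex shift available from sliding the endpoint or from the single subdivision of the original Valiant edge; in any case a constant-size local gadget suffices to flip parity, and this is applied independently per edge. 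Then conclude that the union is an even subdivision of $G$, induced in $R'(w,h)$, with $w,h$ linear in $n$.

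The main obstacle is step (3) together with the parity bookkeeping in step (4): one must be careful that the length-$3$ pendant detours attached to \emph{every} grid edge — including edges not used by any path of $G$ — do not create unwanted adjacencies or shorten distances, so that the subgraph induced on the chosen vertices is exactly the subdivided $G$ and nothing more. This is where the refinement constant $k$ and the precise geometry of Figure~\ref{fig:subdiv} matter: I would argue that each pendant path hangs off two specific subdivision vertices and its internal vertices have no other neighbours in $R'$, so selecting or not selecting a detour is a purely local choice with no side effects, and that turns and near-passes of edge-paths are separated by at least one full unused grid edge after refinement. Once that isolation is established, the parity adjustment is routine, and linearity of $w,h$ is immediate since every operation only multiplies the grid size by a constant.
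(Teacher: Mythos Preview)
Your overall approach is the paper's: embed $G$ in a grid $R(w,h)$ via Valiant, take the $1$-subdivision $H'$ inside $R'(w,h)$, then reroute along the extra length-$3$ paths to fix parities edge by edge. But step~(4) contains a real confusion that you flag and never resolve, and it comes from a wrong picture of $R'$.

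The added length-$3$ paths of $R'$ join \emph{consecutive subdivision vertices} $s_{uv}$ and $s_{vw}$ (for collinear grid vertices $u,v,w$), not two grid vertices; in particular the two routes are not internally disjoint. The two $s_{uv}$--$s_{vw}$ routes have lengths $2$ (straight through $v$) and $3$ (the detour), so the difference is $1$, odd. Your ``$2+3-1=4$'' and ``longer by a fixed even amount'' are simply wrong, and the hedging about ``combining with a one-vertex shift'' is unnecessary. With the correct picture the argument is immediate: each $P'_e$ has even length $2\ell_e$, so $H'$ is automatically an \emph{odd} subdivision of $G$; applying exactly one detour on each $P'_e$, bypassing a single interior grid vertex $v$ (available because every straight segment has length $\ge 5$), changes the length by $+1$ and yields the even subdivision $H''$. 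Your inducedness worries in step~(3) are overstated: detour-internal vertices have degree $2$ in $R'$ and are adjacent only to the two subdivision vertices at their ends, and subdivision vertices are adjacent only to their two grid endpoints and to detour-internal vertices; hence once $H$ is induced in $R$ with internally disjoint $P_e$, both $H'$ and $H''$ are induced in $R'$ without any further separation hypotheses or refinement constant beyond $k=5$.
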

\begin{proof}
Let $H$ be the subdivision of $G$ that is an induced subgraph of the
grid $R(w,h)$. Let $P_e$ denote the path in $H$ corresponding to an
edge $e$ in $G$. We assume that $P_e$ is the union of horizontal and
vertical grid-paths of length at least 5.  We now transform the grid
$R(w,h)$ into $R'(w,h)$ by subdividing each edge once and by adding
paths of length 3 between the newly introduced vertices as explained
before. Clearly, a 1-subdivision of $H$, which we shall denote by
$H'$, is an induced subgraph of $R'(w,h)$. It is also clear that $H'$
is an odd subdivision of $G$. Let $P'_e$ denote the path in $H'$
corresponding to an edge $e$ of $G$. Note that $P'_e$ consists of
1-subdivisions of vertical and horizontal grid-paths.

For every edge $e$ of $G$, we do the following procedure on $P'_e$ in
$H'$ to obtain a new graph $H''$: we replace one of the subdivided
horizontal or vertical grid-paths that make up $P'_e$ to obtain
$P''_e$ which has an even number of vertices as shown in Figure
\ref{fig:modify}. The new graph $H''$ so obtained is an even
subdivision of $G$ and is also an induced subgraph of
$R'(w,h)$.\hfill\bbox
\end{proof}

\begin{figure}
\center
\begin{tabular}{cp{0.1in}c}
\includegraphics[scale=0.6]{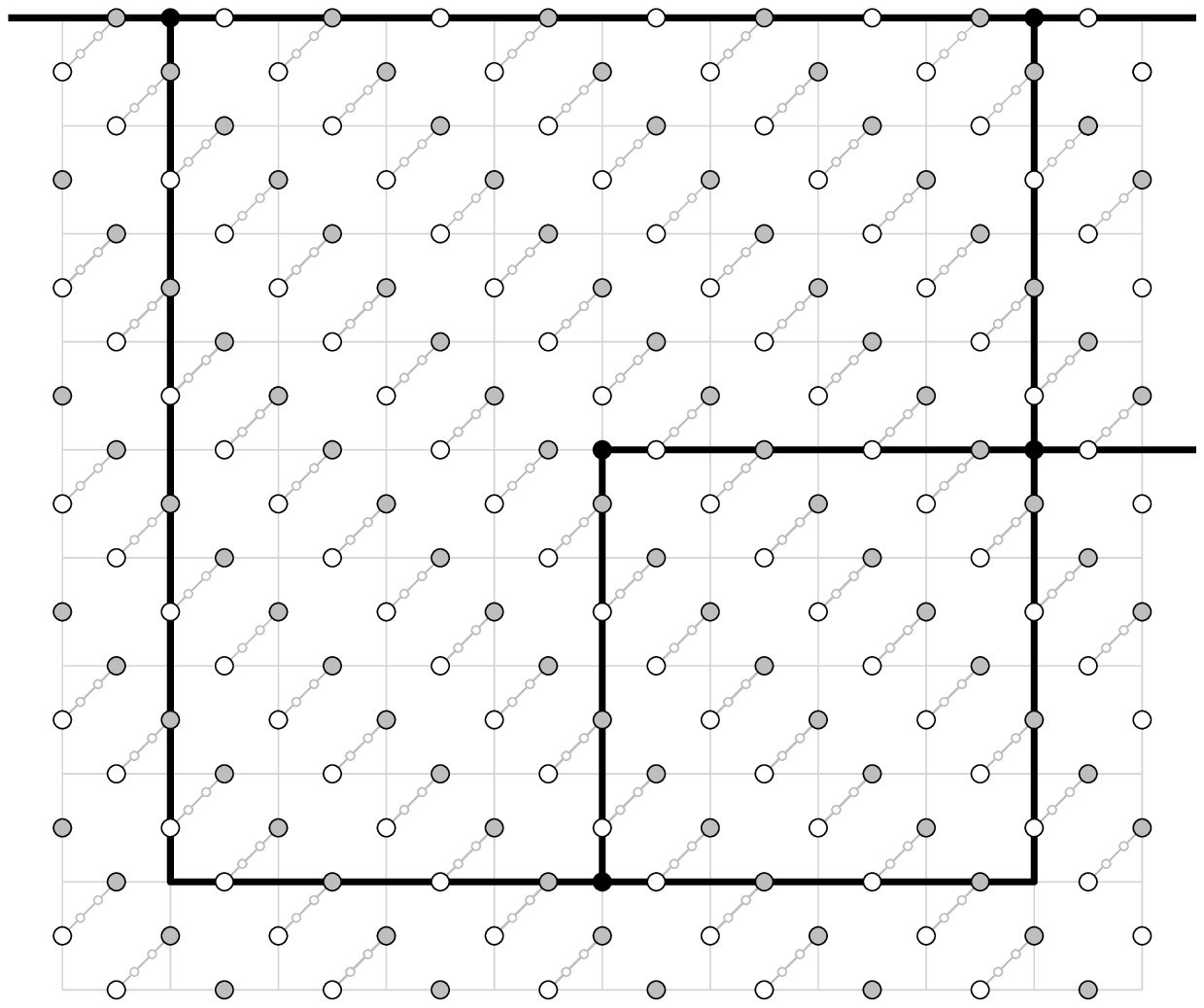}&&
\includegraphics[scale=0.6]{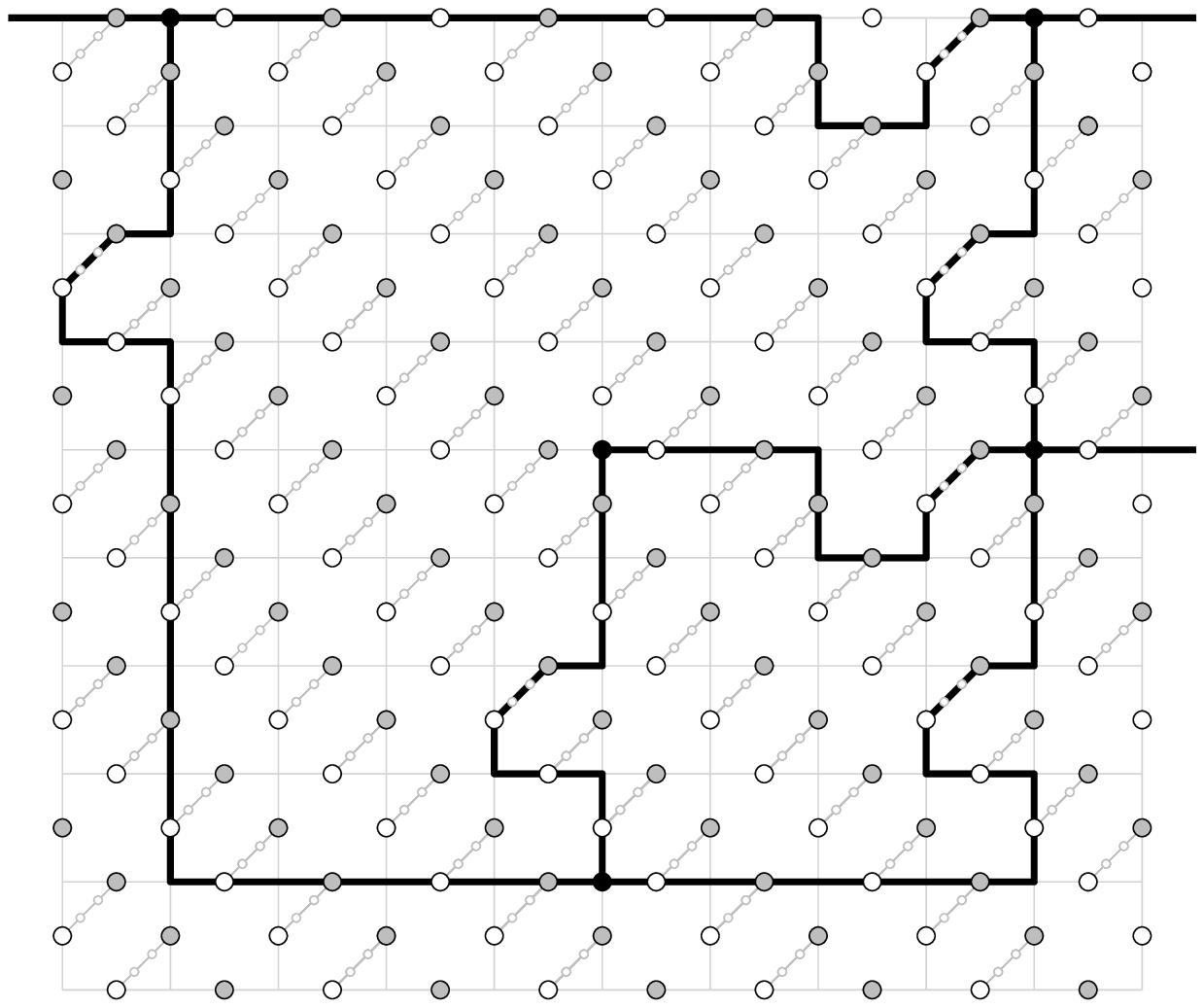}
\end{tabular}
\caption{Modifying the paths in $H'$ to obtain $H''$: A part of the graph in
Figure~\ref{fig:embed} is shown. The grid vertices are not drawn.}
\label{fig:modify}
\end{figure}

\begin{lemma}
\label{lem:u2i-u3t}
For any $w$ and $h$ the graph $\overline{R'(w,h)}$ is both a unit
2-interval graph as well as a unit 3-track graph.  Thus since those
classes are closed under taking induced subgraphs, they also contain
the induced subgraphs of $\overline{R'(w,h)}$.
\end{lemma}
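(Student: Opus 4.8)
The plan is to prove Lemma~\ref{lem:u2i-u3t} in the same style as Theorems~\ref{thm:unit-3-int} and~\ref{thm:unit-4-track}: by writing down two explicit representations of $\overline{R'(w,h)}$ — one unit $2$-interval and one unit $3$-track — directly from the grid drawing of $R'(w,h)$. The ``closed under induced subgraphs'' clause is then immediate, since removing a vertex of $\overline{R'(w,h)}$ just removes the corresponding intervals (resp. arcs) from the representation.

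First I would fix coordinates and names for the vertices of $R'(w,h)$ coming from its drawing: the grid vertices $v_{i,j}$, the midpoint of each horizontal grid edge $v_{i,j}v_{i+1,j}$ and of each vertical grid edge $v_{i,j}v_{i,j+1}$, and the internal vertices of each added length-$3$ path. Every vertex then carries a position in the plane and belongs to one of a \emph{constant} number of types; the structural facts to record are that (i) $R'(w,h)$ has maximum degree~$4$, so in $\overline{R'(w,h)}$ each vertex is non-adjacent to only a bounded number of others, (ii) the neighbours of a vertex lie within bounded distance of it in the drawing, and (iii) the edge set of $R'(w,h)$ decomposes into a few \emph{direction classes} — roughly, ``horizontal'' edges, ``vertical'' edges, and the edges internal to the length-$3$ gadgets — in such a way that no vertex is incident to edges of more classes than there are blocks available (two intervals, or three tracks). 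It is precisely this structure, which a general maximum-degree-$4$ graph lacks, that makes it possible to beat the bound of~$3$ from Theorem~\ref{thm:unit-3-int}.

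The two representations are then built exactly in the spirit of the proofs of Theorems~\ref{thm:unit-3-int}--\ref{thm:unit-4-track}. For the unit $2$-interval one: each of the two interval blocks receives one unit interval per vertex, laid out in an order read off the grid positions and types, with each block made ``responsible'' for some of the direction classes, and with the endpoints perturbed so that two unit intervals of a block overlap precisely when the two vertices are \emph{not} joined by an edge of $R'(w,h)$ charged to that block; since every edge is charged to exactly one block, two vertices intersect in some block if and only if they are non-adjacent in $R'(w,h)$. Concretely this means giving, for each of the constantly many vertex types, closed-form unit intervals of a common length (polynomial in $w$ and $h$) as functions of $i,j,w,h$, and then verifying a bounded list of adjacent/non-adjacent cases, of the kind that Figures~\ref{fig:u3i}--\ref{fig:u4t} summarise. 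The unit $3$-track representation is the same idea with one extra line, so that each direction class can be placed on its own track, the arcs of track~$\ell$ overlapping iff the two vertices are non-adjacent ``in direction~$\ell$''.

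The main obstacle is \emph{designing} the layout, not checking it. A midpoint vertex lies on a grid edge and also on the length-$3$ gadget(s) around its endpoint(s), so it is incident to edges of several direction classes at once, and one must decide how to split the edges — especially the gadget edges — between the two blocks, and choose the secondary orders and perturbations inside each block, so that all of a vertex's at most four forbidden intersections are realised by only two unit intervals. (That this is at all possible is a genuine feature of $R'(w,h)$: the complement of an arbitrary bounded-degree graph need not be a unit $2$-interval graph.) Once a consistent assignment is in hand, confirming that the construction is a valid unit $2$-interval, resp. unit $3$-track, representation of $\overline{R'(w,h)}$ is a routine if tedious case analysis, readable off the corresponding figure.
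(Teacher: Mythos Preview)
Your proposal is an outline rather than a proof: you correctly identify the crux --- arranging that each vertex's at most four forbidden intersections are realised by only two unit intervals (resp.\ three tracks) --- and then explicitly leave it unresolved (``The main obstacle is \emph{designing} the layout, not checking it''). Everything hinges on that design step, and nothing in your sketch indicates how to carry it out; in particular, with vertices indexed by \emph{two} grid coordinates $(i,j)$ it is not clear how unit intervals on a line, whose overlaps are governed by a single linear order, can simultaneously encode horizontal non-adjacencies (differences in $i$) and vertical ones (differences in $j$). The ``direction class'' decomposition you describe does not by itself guarantee this.

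The paper circumvents exactly this difficulty by \emph{not} representing $\overline{R'(w,h)}$ directly. Instead it introduces an auxiliary graph $Q(w,l)$ whose vertices carry a \emph{single} linear index (sets $X^o,X^e,A,B,C,D$, each indexed $1,\ldots,O(wl)$), and shows that $R'(w,h)$ is an induced subgraph of $Q(w,\lceil (w+h)/2\rceil-1)$. The point of $Q$ is that the two-dimensional grid has been read off into one dimension so that every non-adjacency in $Q$ is between vertices whose indices differ by a fixed bounded shift (either $O(1)$ or $\approx w$); this is precisely what lets one write closed-form unit intervals in a single parameter, in the style of Theorems~\ref{thm:unit-3-int} and~\ref{thm:unit-4-track}. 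The explicit unit $2$-interval and unit $3$-track representations are then given for $\overline{Q(w,l)}$, and closure under induced subgraphs finishes the job. If you want to salvage your direct approach, the first thing you would need is an analogous linearisation of the vertex set of $R'(w,h)$.
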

\begin{proof}
The graph $Q(w,l)$ is defined as follows.  $V(Q(w,l))=X^o\cup X^e\cup
A\cup B\cup C\cup D$ where $X^o=\{x^o_1,\ldots,x^o_{w(l+1)}\}$,
$X^e=\{x^e_1,\ldots,x^e_{wl}\}$, $A=\{a_1,\ldots,a_{2wl}\}$,
$B=\{b_1,\ldots,b_{2wl}\}$, $C=\{c_1,\ldots,c_{2wl}\}$ and
$D=\{d_1,\ldots,d_{2wl}\}$.

\begin{eqnarray*}
E(Q(w,l))&=&\bigcup_{i=1}^{wl}\{x^o_ia_{2i-1}, x^o_ia_{2i}\}
\cup \bigcup_{i=w+1}^{w(l+1)}\{x^o_ib_{2(i-w)-2},x^o_ib_{2(i-w)-1}\}\\
&&\cup \bigcup_{i=1}^{wl}\{x^e_ia_{2i-2},x^e_ia_{2i-1},x^e_ib_{2i-1},x^e_ib_{2i}\}\\
&&\cup \bigcup_{i=1}^{2wl-1}\{a_ic_i,c_id_i,d_ib_{i+1}\}
\end{eqnarray*}

Figure \ref{fig:auggrid} shows a drawing of the graph $Q(w,l)$. The
vertices in $(\bigcup_{i=1}^l \{a_{2wi},b_{2wi}\}\cup
\bigcup_{i=1}^{2wl}\{c_i,d_i\})$ are not shown to avoid clutter.
\begin{figure}
\center
\includegraphics[scale=.75]{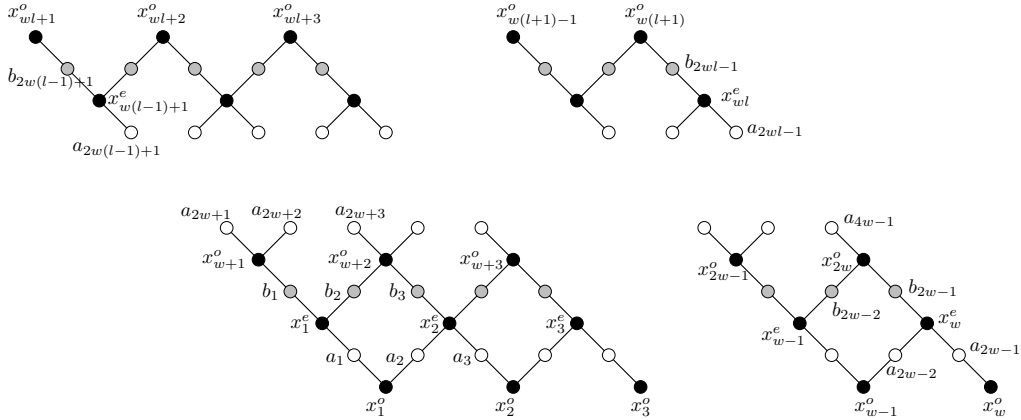}
\caption{Part of the graph $Q(w,l)$.}
\label{fig:auggrid}
\end{figure}
It can be seen that $R'(w,h)$ is an induced subgraph of
$Q(w,\lceil\frac{w+h}{2}\rceil-1)$~(see Figure~\ref{fig:grid}).
Thus, to show that for any $w$ and $h$, $\overline{R'(w,h)}$ is a unit
2-interval graph and a unit 3-track graph, we only need to show that
$\overline{Q(w,l)}$ for any $w$ and $l$ is both a unit 2-interval graph
as well as a unit 3-track graph.
\begin{figure}
\center
\includegraphics[scale=0.5]{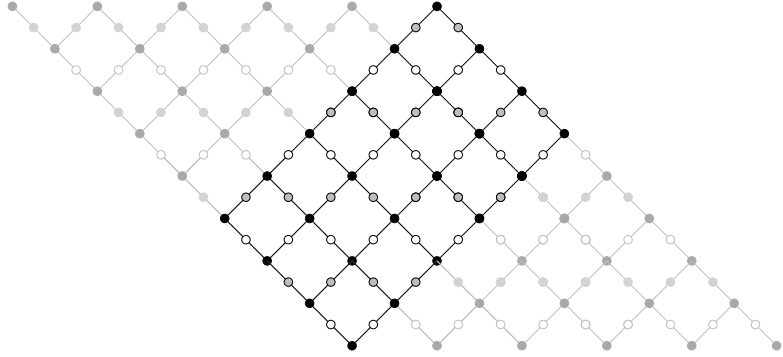}
\caption{$R'(w,h)$ is an induced subgraph of $Q(w,l)$ where $l=\lceil
\frac{w+h}{2}\rceil-1)$. The vertices in $(\bigcup_{i=1}^l \{a_{2wi},
b_{2wi}\}\cup \bigcup_{i=1}^{2wl}\{c_i,d_i\})$ are not shown.}
\label{fig:grid}
\end{figure}

We construct a unit 2-interval representation $f$ for $\overline{Q(w,l)}$ as follows (see also Figure~\ref{fig:npc-u2i}).

\medskip

$
\begin{array}{lcl}
I_1(a_i)&=&[2i,2i+6n]\\
I_1(c_i)&=&[2i+6n+2,2i+12n+2]\\
I_1(x^o_i)&=&[4i+6n+1,4i+12n+1]\\
I_2(a_i)&=&[2i+12n+4,2i+18n+4]\\
I_1(x^e_i)&=&[4i+6n-1,4i+12n-1]\\
I_1(d_i)&=&[2i,2i+6n]\\
I_1(b_i)&=&[18n+6-2i,24n+6-2i]\\
I_2(d_i)&=&[24n+6-2i,30n+6-2i]\\
I_2(b_i)&=&[30n+10-2i,36n+10-2i]\\
I_2(x^e_i)&=&[24n+9-4i,30n+9-4i]\\
I_2(x^o_i)&=&[24n+11-4i+4w,30n+11-4i+4w]\\
I_2(c_i)&=&[30n+8-2i,36n+8-2i]
\end{array}
$

\medskip
\begin{figure}
\center
\includegraphics[scale=.75]{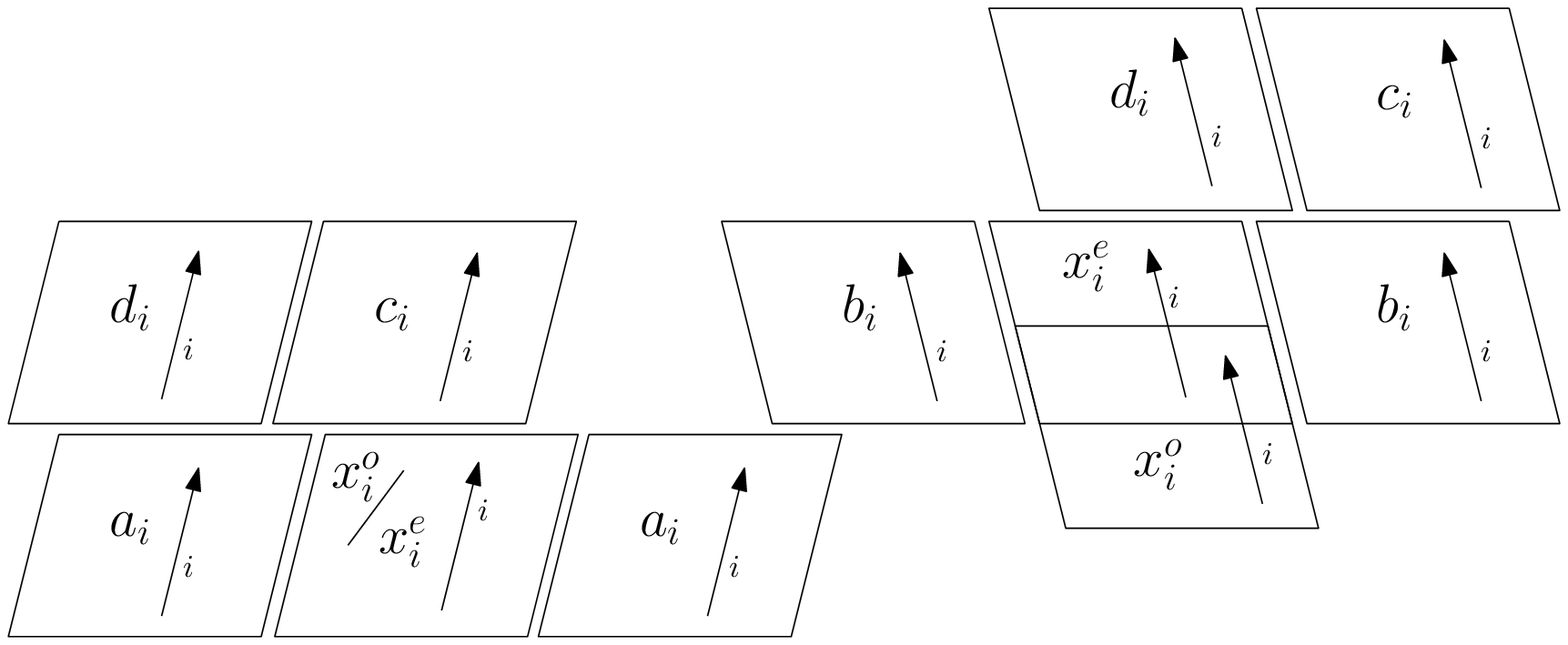}
\caption{Unit 2-interval representation of $\overline{Q(w,l)}$.}
\label{fig:npc-u2i}
\end{figure}

It is easy to verify that all the intervals have length $6n$.  Then
one can also check in the figure that this is a valid unit 2-interval
representation of $\overline{Q(w,l)}$. Note that this construction is
slightly more involved than the previous ones. Here the second blocks
$x^e_i$ and $x^o_i$ are slightly shifted. This is due to the fact that
$x^e_i$ is adjacent to every $b_j$, except $b_i$ and $b_{i+1}$, and
that $x^o_i$ has to avoid a distinct $b$'s.  Indeed $x^o_i$ is
adjacent to every $b_j$, except $b_{i-w-1}$ and $b_{i-w}$.  We
construct now a unit 3-track representation for $\overline{Q(w,l)}$ as
follows (see also Figure~\ref{fig:npc-u3t}).

\medskip

$
\begin{array}{lcl}
I_1(b_i)&=&[i,i+2n]\\
I_1(x^e_i)&=&[2i+2n+1,2i+4n+1]\\
I_1(x^o_i)&=&[2i-2w+2n,2i-2w+4n]\\
I_1(a_i)&=&[i+4n+4,i+6n+4]\\
I_1(c_i)&=&[i+2n+3,i+4n+3]\\
I_1(d_i)&=&[i+4n+4,i+6n+4]\\
I_2(a_i)&=&[i,i+2n]\\
I_2(x^e_i)&=&[2i+2n,2i+4n]\\
I_2(x^o_i)&=&[2i+2n+1,2i+4n+1]\\
I_2(b_i)&=&[i+2w+4n+4,i+2w+6n+4]\\
I_2(d_i)&=&[i+2w+2n+4,i+2w+4n+4]\\
I_2(c_i)&=&[i+2w+4n+5,i+2w+6n+5]\\
I_3(x^o_i)&=&[2i,2i+4n]\\
I_3(a_i)&=&[i+4n+2,i+8n+2]\\
I_3(c_i)&=&[i+8n+3,i+12n+3]\\
I_3(b_i)&=&[8n+3-i,12n+3-i]\\
I_3(d_i)&=&[4n+1-i,8n+1-i]\\
I_3(x^e_i)&=&[12n+5-2i,16n+5-2i]
\end{array}
$

\medskip
\begin{figure}
\center
\includegraphics[scale=.75]{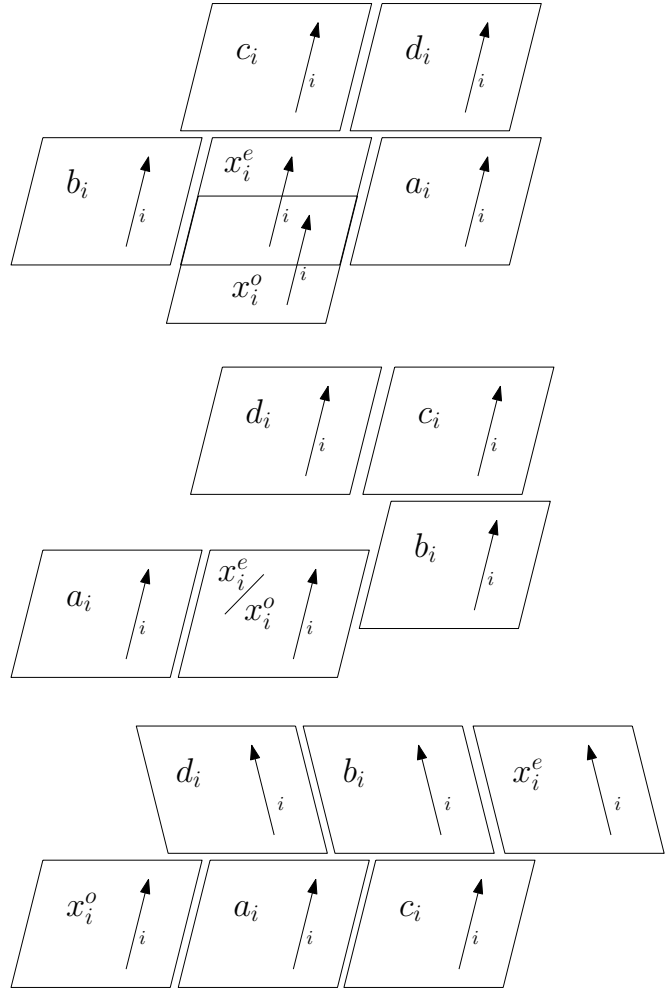}
\caption{Unit 3-track representation of $\overline{Q(w,l)}$.}
\label{fig:npc-u3t}
\end{figure}

It is easy to verify that all the intervals have length $4n$.  Then
one can also check in the figure that this is a valid unit 3-track
representation of $\overline{Q(w,l)}$. Note that here also one has to be carefull
with the $b_i$'s that $x^e_i$ and $x^o_i$ have to avoid.\hfill\bbox
\end{proof}

\begin{theorem}
\label{thm:np-c}
MAXIMUM CLIQUE is NP-complete for unit 2-interval and unit 3-track graphs.
\end{theorem}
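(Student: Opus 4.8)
The statement is an NP-completeness claim, so it has two halves. Membership in NP is immediate: a clique of the prescribed size is a polynomial-size certificate. For hardness the plan is to reduce from MAXIMUM INDEPENDENT SET on planar graphs of maximum degree at most $4$ (NP-hard by a classical result), composing the two structural lemmas already proved: Lemma~\ref{lem:R'} turns such a graph into an even subdivision that is an induced subgraph of some $R'(w,h)$, and Lemma~\ref{lem:u2i-u3t} says the complement of any induced subgraph of $R'(w,h)$ is simultaneously a unit $2$-interval graph and a unit $3$-track graph. Since a clique of $\overline{H}$ is exactly an independent set of $H$, this will transfer hardness of MAXIMUM INDEPENDENT SET on these subdivided grids to hardness of MAXIMUM CLIQUE on both target classes at once.

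The one quantitative ingredient needed is the effect of even subdivision on the independence number. I would first record the elementary fact that if $G_e$ is obtained from $G$ by subdividing a single edge $e$ exactly $2k$ times, then $\alpha(G_e)=\alpha(G)+k$: for ``$\ge$'', take a maximum independent set of $G$ and add every second internal vertex of the new path, starting at the end not meeting the chosen set; for ``$\le$'', any independent set meets the closed path on $2k+2$ vertices in at most $k+1$ vertices, and a short exchange argument moves the surplus $k$ onto internal vertices without disturbing the rest. Applying this edge by edge, if $H'$ is an even subdivision of $G$ in which the edge $e$ is subdivided $2k_e$ times, then $\alpha(H')=\alpha(G)+\sum_{e\in E(G)}k_e$.

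Now the reduction: given a planar $G$ of maximum degree at most $4$ on $n$ vertices, apply Lemma~\ref{lem:R'} to obtain in polynomial time an even subdivision $H'$ of $G$ that is an induced subgraph of $R'(w,h)$ with $w,h=O(n)$; then $H'$ has $O(n^2)$ vertices, every edge of $G$ is subdivided a polynomially bounded number $2k_e$ of times, and $S:=\sum_e k_e$ is computable in polynomial time. By Lemma~\ref{lem:u2i-u3t}, $\overline{H'}$ is a unit $2$-interval graph and a unit $3$-track graph, with a representation of either type produced in linear time. Since $\omega(\overline{H'})=\alpha(H')=\alpha(G)+S$, the graph $G$ has an independent set of size at least $t$ if and only if $\overline{H'}$ has a clique of size at least $t+S$. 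This is a valid polynomial-time many-one reduction from an NP-hard problem to MAXIMUM CLIQUE on unit $2$-interval graphs, and simultaneously to MAXIMUM CLIQUE on unit $3$-track graphs; together with membership in NP this yields NP-completeness for both classes.

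The step to watch is the independence-number bookkeeping of the second paragraph: one must verify the shift is \emph{exactly} $\sum_e k_e$ (an inequality in one direction would break the reduction), and must confirm that the subdivision lengths emerging from the grid embedding of Lemma~\ref{lem:R'} stay polynomial so that both $H'$ and the target value $t+S$ are computed in polynomial time. Once these are checked, the theorem is just the composition of Lemmas~\ref{lem:R'} and~\ref{lem:u2i-u3t} with complementation.
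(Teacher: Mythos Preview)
Your proposal is correct and follows essentially the same route as the paper: reduce from MAXIMUM INDEPENDENT SET on bounded-degree planar graphs, invoke Lemma~\ref{lem:R'} to land in induced subgraphs of $R'(w,h)$, complement, and apply Lemma~\ref{lem:u2i-u3t}. The only cosmetic differences are that the paper starts from degree~$3$ rather than degree~$4$ and dismisses the independence-number shift under even subdivision as folklore, whereas you spell out the $\alpha(H')=\alpha(G)+\sum_e k_e$ bookkeeping explicitly.
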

\begin{proof}
It is known that the MAXIMUM INDEPENDENT SET problem is NP-complete
even when restricted to planar graphs of degree at most
3~\cite{Garey}.  It is folklore that the instance $(G,k)$ of MAXIMUM
INDEPENDENT SET is equivalent to an instance $(H,k+k')$, where $H$ is
an even subdivision of $G$ with $|V(G)|+2k'$ vertices.  Thus according
to Lemma~\ref{lem:R'}, MAXIMUM INDEPENDENT SET is NP-complete on the
class of induced subgraphs of $R'(w,h)$.  MAXIMUM CLIQUE is thus
NP-complete on the class of induced subgraphs of $\overline{R'(w,h)}$.
Finally by Lemma~\ref{lem:u2i-u3t} this class of graphs is contained
in unit 2-interval and unit 3-track graphs. MAXIMUM CLIQUE is thus
NP-complete on these classes.\hfill\bbox
\end{proof}

\section{Complements of $t$-interval graphs}
\label{sec:comp}
Very recently, Jiang and Zhang studied the class of complements of
$t$-interval graphs~\cite{JiangZhang}. In particular they proved that
MINIMUM (INDEPENDENT) DOMINATING SET parameterized by the solution
size is in W[1] for co-$2$-interval graphs, and they proved that
MINIMUM DOMINATING SET is W[1]-hard for co-$3$-track graphs.

Following the same line of proof as for Theorem~\ref{thm:apx} we
can prove the following APX-hardness results, for this kind of graph classes.
\begin{theorem}$\ $
\begin{itemize}
\item[(i)] MINIMUM VERTEX COVER is APX-complete in co-2-interval
  graphs, and the complement classes of all the classes of
  Theorem~\ref{thm:subd}.
\item[(ii)] For any graph $G$, $Subd_3(G)$ is a co-2-interval, a
  co-unit-3-interval, a co-3-track, a co-unit-4-track, and a
  co-2-circular track graph, and MINIMUM (INDEPENDENT) DOMINATING SET
  is APX-hard for these classes of graphs.
\end{itemize}
\end{theorem}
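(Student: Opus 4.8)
For part (i), recall first that Minimum Vertex Cover has a polynomial $2$-approximation on every graph, so it lies in APX and only APX-hardness on each listed class remains. As recalled just before Theorem~\ref{thm:apx}, Maximum Independent Set is APX-hard already on the $2k$-subdivisions of $3$-regular graphs for every fixed $k\ge0$~\cite{ChlebChleb}. Every such graph has maximum degree $3$, so its vertex cover number $\tau$ and independence number $\alpha$ are both $\Theta(|V|)$, hence within an absolute constant factor of one another; consequently an APX-hardness gap for $\alpha$ is one for $\tau=|V|-\alpha$, and Minimum Vertex Cover is APX-hard on the $2k$-subdivisions of $3$-regular graphs for each fixed $k\ge1$. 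It remains to route these into the required complement classes: by Theorem~\ref{thm:subd}, $Subd_4(G)$ is co-$2$-interval and $Subd_2(G)$ is co-unit-$3$-interval, co-$3$-track, co-unit-$4$-track, co-unit-$2$-circular-interval, co-$2$-circular-track and co-unit-$4$-circular-track, and $4$ and $2$ are both even; hence Minimum Vertex Cover is APX-hard, and so APX-complete, on co-$2$-interval graphs and on the complement of every class of Theorem~\ref{thm:subd}.

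For the structural part of (ii), I would give, for each of the five target classes, a representation of $\overline{Subd_3(G)}$ modelled on the corresponding construction in Theorem~\ref{thm:subd}, but with three subdivision vertices on the path $(x_{l(k)},a_k,b_k,c_k,x_{r(k)})$ of each edge $e_k$. The only non-edges of $\overline{Subd_3(G)}$ among these are $a_kx_{l(k)}$, $a_kb_k$, $b_kc_k$ and $c_kx_{r(k)}$, so $a_k$ takes over the ``left'' role of the $a_k$ of the $Subd_2$-constructions, $c_k$ takes over the ``right'' role of their $b_k$ (its non-neighbour $a_k$ replaced by the new vertex $b_k$), and $b_k$ is a fresh middle vertex to be slotted into the remaining free space so that it meets everything except $a_k$ and $c_k$; for $2$-interval graphs one instead starts from the four-subdivision-vertex construction of Theorem~\ref{thm:2-int} and deletes $d_k$, shortening the coordinates accordingly. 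As in Theorem~\ref{thm:subd} the check is finite and case-by-case, and for the unit variants the only subtlety is realising all the required non-edges simultaneously --- precisely the difficulty already handled in the proofs given there --- so the adaptation is essentially mechanical. This gives that $Subd_3(G)$ is a co-$2$-interval, co-unit-$3$-interval, co-$3$-track, co-unit-$4$-track and co-$2$-circular track graph.

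For the hardness of domination in (ii), I would reduce from $3$-regular graphs, on which Minimum Dominating Set and Minimum Independent Dominating Set are APX-hard~\cite{PapaYanna}, via the map $G\mapsto Subd_3(G)$; the point is the identity $\gamma(Subd_3(G))=\gamma(G)+|E(G)|$ relating domination numbers, together with its analogue $i(Subd_3(G))=i(G)+|E(G)|$ for the independent domination number. The easy direction: a (maximal) dominating set $D$ of $G$ lifts by keeping $D$ and adding, on the path of each edge $e_k$, exactly one vertex --- $c_k$ if $x_{l(k)}\in D$, else $a_k$ if $x_{r(k)}\in D$, else $b_k$ --- which stays independent and maximal when $D$ does. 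Conversely, every dominating set $D'$ of $Subd_3(G)$ meets each of the $|E(G)|$ pairwise-disjoint triples $\{a_k,b_k,c_k\}$ (to dominate $b_k$); after normalising $D'$, without increasing its size, so that no edge simultaneously has two triple-vertices in $D'$ and no endpoint in $D'$ (replace one interior vertex by an endpoint, which keeps the set dominating), one checks that $D'\cap V(G)$ dominates $G$ --- an $x_i$ undominated by $D'\cap V(G)$ would, on an incident edge witnessing its domination in $Subd_3(G)$, force $D'$ to meet the triple in exactly one vertex while leaving another path-vertex undominated --- whence $|D'|\ge|D'\cap V(G)|+|E(G)|\ge\gamma(G)+|E(G)|$. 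Since on $3$-regular graphs $\gamma(G)$, $|E(G)|$ and $\gamma(Subd_3(G))$ are pairwise within constant factors, $G\mapsto Subd_3(G)$ is an $L$-reduction, and by the representations above Minimum (Independent) Dominating Set is APX-hard on all five co-classes.

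The overall template --- APX-hardness on bounded-degree graphs, a gap-preserving subdivision, and the complement constructions --- is exactly that of Theorem~\ref{thm:apx}; so the genuinely new work, and the main obstacle, is producing the five explicit representations of $\overline{Subd_3(G)}$, though since $Subd_3$ differs from $Subd_2$ only by splitting the single middle vertex of each edge-path into two adjacent vertices, this should be close to routine. The remaining technically delicate point is the ``$\ge$'' half of $\gamma(Subd_3(G))=\gamma(G)+|E(G)|$ and, more so, of its independent-domination analogue (where the normalisation must also preserve independence); both are short because in $Subd_3$ every edge becomes a path of length $4$ whose centre $b_k$ forces a private dominator.
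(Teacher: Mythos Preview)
Your proposal is correct, and for the structural parts (the representations of $\overline{Subd_3(G)}$) it matches the paper exactly: the paper simply gives the five constructions via figures, obtained as you describe by inserting a middle vertex into the $Subd_2$-constructions (or dropping $d_k$ from the $Subd_4$ one).

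Where you diverge is in the hardness arguments. The paper does not redo any reduction: it quotes a two-part theorem of Chleb\'{\i}k and Chleb\'{\i}kov\'a which already states, directly, that (i) \textsc{Minimum Vertex Cover} is APX-complete on $2k$-subdivisions of $3$-regular graphs for every fixed $k\ge 0$, and (ii) \textsc{Minimum (Independent) Dominating Set} is APX-complete on $3k$-subdivisions of graphs of maximum degree~$3$ for every fixed $k\ge 0$. Combining this with Theorem~\ref{thm:subd} (for $k=1,2$) and with the new $Subd_3$ representations (for $k=1$) finishes both items in one line each. Your route---deriving APX-hardness of VC from that of IS via the bounded-degree $\tau=|V|-\alpha$ argument, and, more substantially, proving $\gamma(Subd_3(G))=\gamma(G)+|E(G)|$ and $i(Subd_3(G))=i(G)+|E(G)|$ by hand to build an $L$-reduction---is a valid alternative, but it is essentially a reproof of (the $k=1$ case of) the very result the paper cites. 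The payoff of your approach is that it is self-contained; the payoff of the paper's is that the whole proof fits in a sentence, and the delicate normalisation step you flag for independent domination is already absorbed into the cited theorem.
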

\begin{proof}
The first item follows from the fact that MINIMUM VERTEX COVER is
2-approximable~\cite{Monien} and the first item of the following
theorem.
\begin{theorem}[\cite{ChlebChleb}]\label{thm:apxhardbis}
\begin{itemize}
\item[(i)] MINIMUM VERTEX COVER is APX-complete when restricted to
  $2k$-subdivisions of 3-regular graphs for any fixed integer $k\ge
  0$.
\item[(ii)] The problems MINIMUM DOMINATING SET, and MINIMUM
  INDEPENDENT DOMINATING SET are APX-complete when restricted to
  $3k$-subdivisions of degree at most 3 graphs for any fixed integer
  $k\ge 0$.
\end{itemize}
\end{theorem}

For the second item the constructions are described in the following
figures.  Here an edge $e_k=x_ix_j$ of $G$, with $i<j$, corresponds to
a path $(x_{l(k)},a_k,b_k,c_k,x_{r(k)})$ of $Subd_3(G)$. Then
Theorem~\ref{thm:apxhardbis}.(ii) clearly implies the APX-hardness.
\begin{figure}
\center
\includegraphics[scale=0.6]{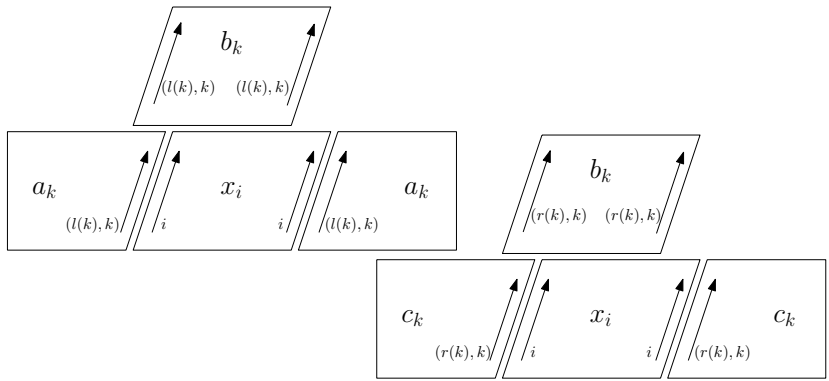}
\caption{The 2-interval representation of $\overline{Subd_3(G)}$.}
\label{fig:dc2i}
\end{figure}

\begin{figure}
\center
\includegraphics[scale=0.8]{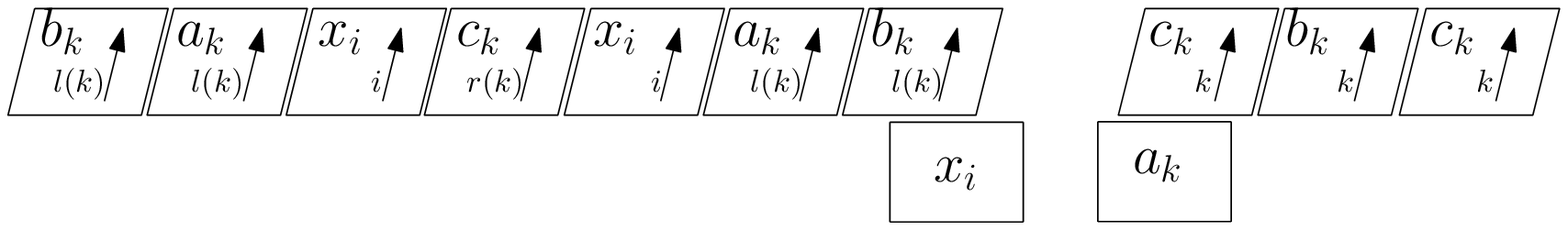}
\caption{The unit 3-interval representation of $\overline{Subd_3(G)}$.}
\label{fig:dcu3i}
\end{figure}

\begin{figure}
\center
\includegraphics[scale=0.6]{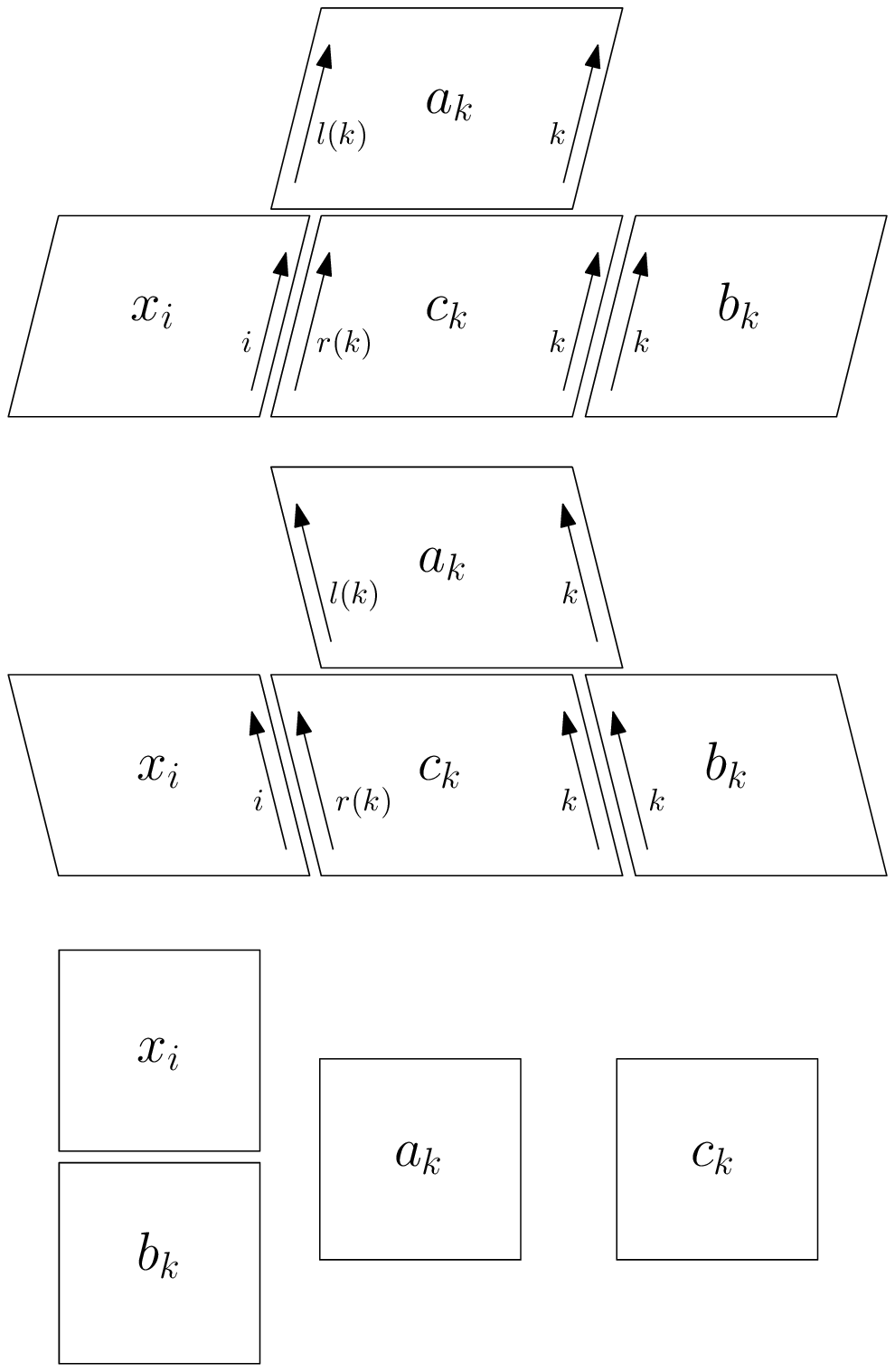}
\caption{The 3-track representation of $\overline{Subd_3(G)}$.}
\label{fig:dc3t}
\end{figure}

\begin{figure}
\center
\includegraphics[scale=0.6]{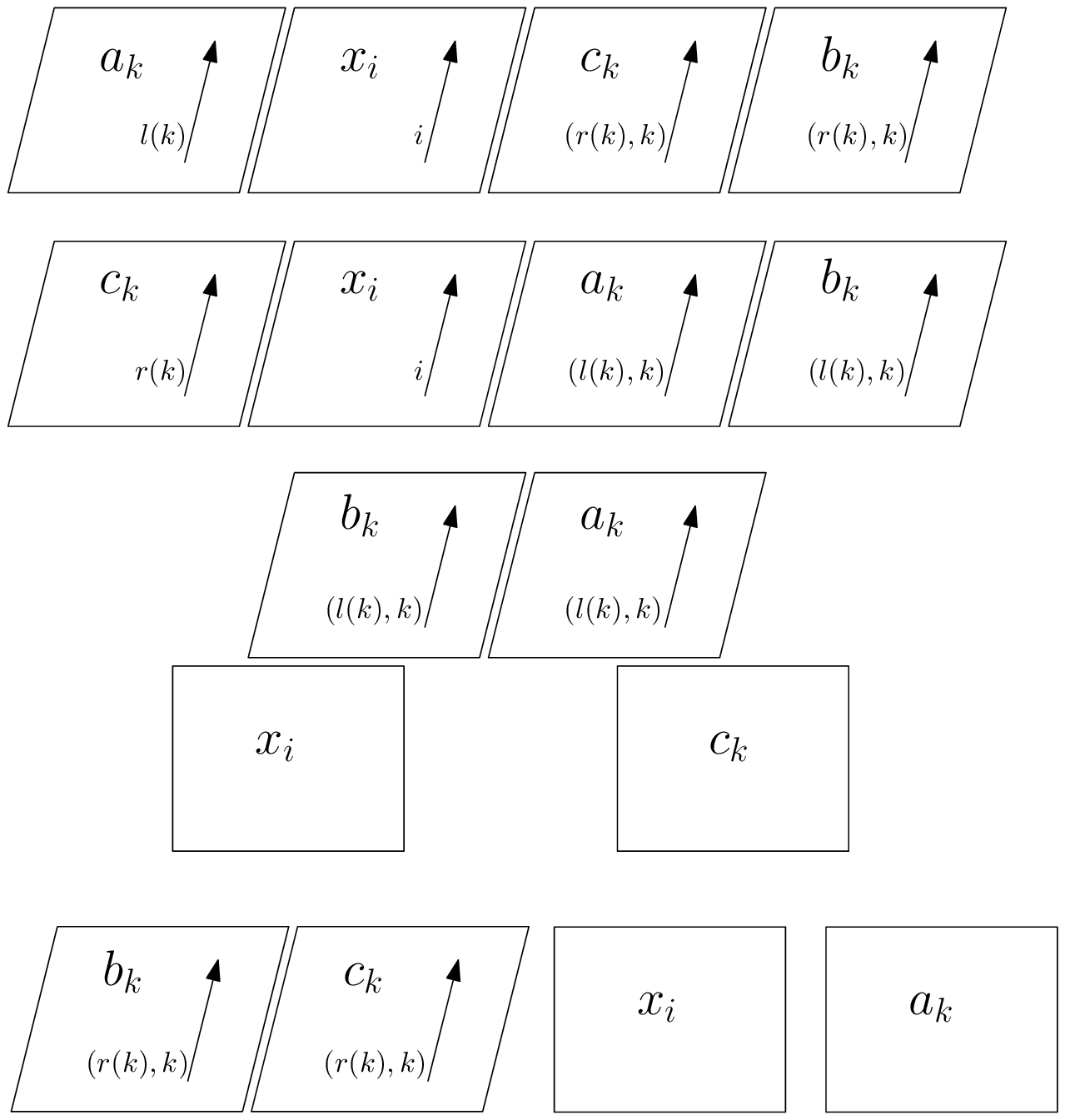}
\caption{The unit 4-track representation of $\overline{Subd_3(G)}$.}
\label{fig:dcu4t}
\end{figure}

\begin{figure}
\center
\includegraphics[scale=0.6]{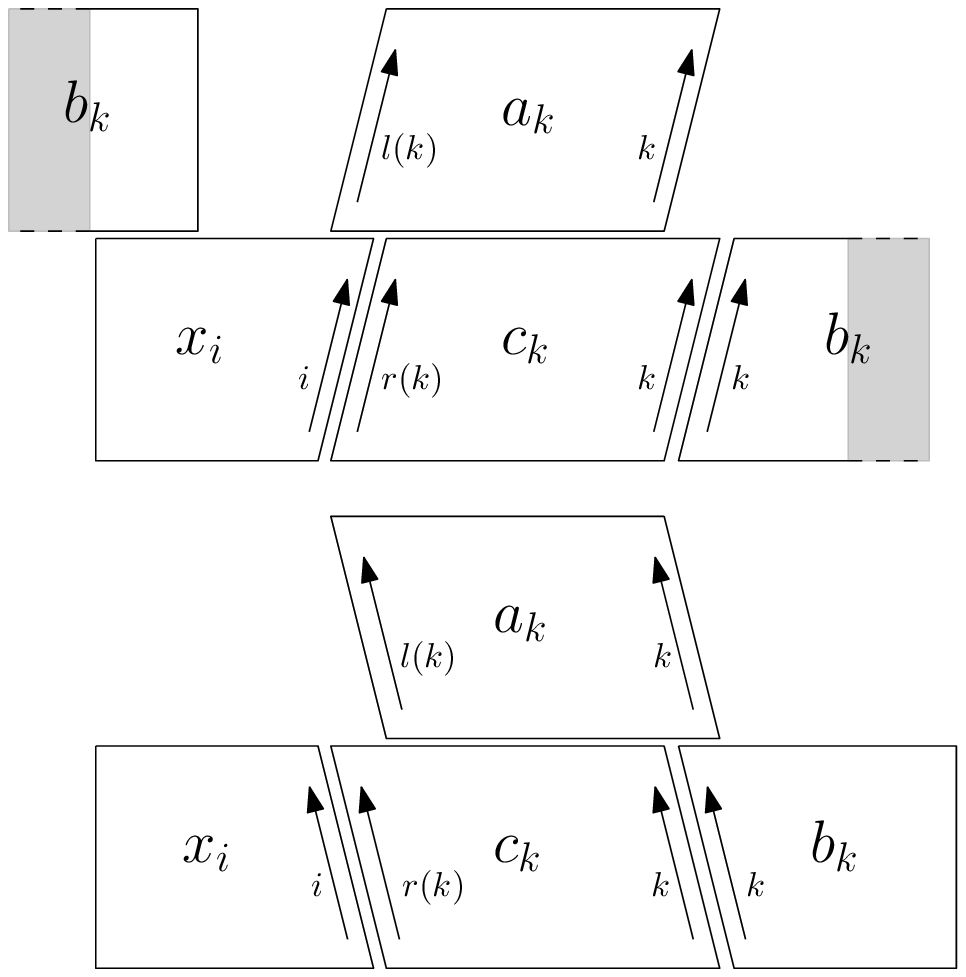}
\caption{The 2-circular track representation of $\overline{Subd_3(G)}$.}
\label{fig:dc2ct}
\end{figure}

\hfill\bbox
\end{proof}

\section{Concluding remarks}

The difference between the $4t$-approximation of Kammer et
al.~\cite{Kammer} and our $t$-approximation lies in two places. In
their paper they proved that $t$-interval graphs are $2t$-perfectly
orientable, but following the lines of Theorem~\ref{thm:approx} one
can see that those graphs are $t$-perfectly orientable. This improves
their approximation for MAXIMUM WEIGHTED INDEPENDENT SET, MINIMUM
VERTEX COLORING, and MINIMUM CLIQUE PARTITION in $t$-interval
graphs. For MAXIMUM WEIGHTED INDEPENDENT SET and MINIMUM VERTEX
COLORING this reaches the best known ratio of $t$~\cite{Bar} in a
simpler way, and for the other problems it improves the best known
approximation ratios~\cite{Kammer}.  Then Kammer et al. proved that
MAXIMUM WEIGHTED CLIQUE can be $2k$-approximated in $k$-perfectly orientable
graphs. Again, following the lines of Theorem~\ref{thm:approx} one can
see that MAXIMUM WEIGHTED CLIQUE can be $k$-approximated for those graphs. This
improves (by 2) their approximation for MAXIMUM WEIGHTED CLIQUE in $t$-fat
objects intersection graphs.

In our approximation algorithm (as in the previous algorithms) we
assume that we are given an interval representation. We wonder what we
can do if we are not given such representation.
\begin{open}
Can MAXIMUM (WEIGHTED) CLIQUE be polynomially $c(t)$-approximated in
$t$-interval graphs, for some function $c$, if we are not given an
interval representation?
\end{open}
 This would be the case if there is an algorithm that computes, given
 a $t$-interval graph $G$, a $c(t)$-interval representation of
 $G$. Actually even when we are given a representation, the
 approximation ratio might be far from the optimal.
\begin{open}
Does there exists an approximation algorithm for MAXIMUM (WEIGHTED)
CLIQUE in $t$-interval graphs with a better approximation ratio?
\end{open}
Let us call $f(t)$ the better ratio a polynomial algorithm can achieve
on $t$-interval graphs (actually $f(t)$ should be an infimum).  For
any graph $G$ on $n$ vertices, it is easy to construct a $n$-interval
representation of $G$. Thus since for any $\epsilon > 0$, one cannot
$O(n^{1-\epsilon})$-approximate the MAXIMUM CLIQUE unless
P~=~NP~\cite{Zuckerman}, we certainly have
$f(t)=\Omega(t^{1-\epsilon})$.

The current status of the complexity of the MAXIMUM CLIQUE problem for
the various classes of multiple interval graphs that were studied are
shown in the table below (where ``Unres.''  stands for
``Unrestricted'').

\medskip

\noindent\begin{tabular}{|p{.06\textwidth}|p{.11\textwidth}|p{.11\textwidth}|p{.11\textwidth}|p{.11\textwidth}|p{.11\textwidth}|p{.11\textwidth}|p{.11\textwidth}|p{.11\textwidth}|}
\hline
\multirow{2}{*}{\parbox{.06\textwidth}{\centering $t$}}&\multicolumn{2}{|c|}{$t$-track}&\multicolumn{2}{|c|}{$t$-interval}&\multicolumn{2}{|c|}{Circular $t$-track}&
\multicolumn{2}{|c|}{Circular $t$-interval}\\
\cline{2-9}
&\centering Unit&\centering Unres.&\centering Unit&\centering Unres.&\centering Unit&\centering Unres.&\centering Unit&{\centering Unres.}\\
\hline
\parbox{.06\textwidth}{\centering 1}&
\parbox{.11\textwidth}{\centering P}&\parbox{.11\textwidth}{\centering P}&
\parbox{.11\textwidth}{\centering P}&\parbox{.11\textwidth}{\centering P}&
\parbox{.11\textwidth}{\centering P}&\parbox{.11\textwidth}{\centering P}&
\parbox{.11\textwidth}{\centering P}&\parbox{.11\textwidth}{\centering P}\\
\hline
\parbox{.06\textwidth}{\centering 2}&
\parbox{.11\textwidth}{\centering P}&\parbox{.11\textwidth}{\centering P}&
\parbox{.11\textwidth}{\centering NP-c}&\parbox{.11\textwidth}{\centering APX-c}&
\parbox{.11\textwidth}{\centering ?}&\parbox{.11\textwidth}{\centering APX-c}&
\parbox{.11\textwidth}{\centering APX-c}&\parbox{.11\textwidth}{\centering APX-c}\\
\hline
\parbox{.06\textwidth}{\centering 3}&
\parbox{.11\textwidth}{\centering NP-c}&\parbox{.11\textwidth}{\centering APX-c}&
\parbox{.11\textwidth}{\centering APX-c}&\parbox{.11\textwidth}{\centering APX-c}&
\parbox{.11\textwidth}{\centering NP-c}&\parbox{.11\textwidth}{\centering APX-c}&
\parbox{.11\textwidth}{\centering APX-c}&\parbox{.11\textwidth}{\centering APX-c}\\
\hline
\parbox{.06\textwidth}{\centering $\geq 4$}&
\parbox{.11\textwidth}{\centering APX-c}&\parbox{.11\textwidth}{\centering APX-c}&
\parbox{.11\textwidth}{\centering APX-c}&\parbox{.11\textwidth}{\centering APX-c}&
\parbox{.11\textwidth}{\centering APX-c}&\parbox{.11\textwidth}{\centering APX-c}&
\parbox{.11\textwidth}{\centering APX-c}&\parbox{.11\textwidth}{\centering APX-c}\\
\hline
\end{tabular}

\medskip

The blanks in this table clearly imply the following questions.
\begin{open}
Is MAXIMUM CLIQUE for unit 2-interval graphs, unit 3-track graphs
or unit 3-circular track graphs APX-hard, or does it admit a PTAS?
\end{open}

\begin{open}
Is MAXIMUM CLIQUE for unit 2-circular track graphs Polynomial or NP-complete?
\end{open}
Koenig~\cite{Koenig} explains that 2-track graphs have a
polynomial-time algorithm for MAXIMUM CLIQUE because for any 2-track
representation of a clique, there is a transversal of size 2 (i.e. two
points such that for every vertex, at least one of its intervals
contains one of these points). We note that this is not true for unit
2-circular track graphs as the complete graph on 5 vertices has a unit
2-circular track representation in which each circular track induces a
cycle on 5 vertices.  This representation clearly does not have a
transversal of size 2.


\end{document}